\newtheorem{thm}{Theorem}[section]
\newtheorem{cor}[thm]{Corollary}
\newtheorem{lem}[thm]{Lemma}
\newtheorem{prop}[thm]{Proposition}
\theoremstyle{definition}
\theoremstyle{remark}
\newtheorem{rem}[thm]{Remark}
\numberwithin{equation}{section}
\newcommand{\Z}{\mathbb Z}
\newcommand{\C}{\mathbb C}
\newcommand{\R}{\mathbb R}
\newcommand{\Pro}{\mathbb P}
\newcommand{\gr}{\mathrm{gr}}
\newcommand{\Q}{\mathbb Q}
\newcommand{\eps}{\varepsilon}
\newcommand{\To}{\longrightarrow}
\newcommand{\A}{\mathbb{A}}
\newcommand{\G}{\mathbb{G}}
\newcommand{\E}{\mathcal{E}}
\newcommand{\SL}{\mathrm{SL}}
\newcommand{\mm}{\mathfrak{m} }
\newcommand{\id}{\mathrm{id} }
\newcommand{\per}{\mathrm{per}}
\newcommand{\Pe}{\mathcal{P}}
\newcommand{\comp}{\mathrm{comp}}
\def\beq{\begin{equation}}
\def\eeq{\end{equation}}
\def\bsp#1\esp{\begin{split}#1\end{split}}
\newcommand{\cN}{\mathcal{N}}
\newcommand{\cC}{\mathcal{C}}
\begin{document}
\begin{flushright}CERN-TH-2020-097\end{flushright}

\author{Francis Brown}
\author{Claude Duhr}

\begin{title}[A double integral of dlog forms which is not polylogarithmic]
{A double integral of dlog forms which is not polylogarithmic} \end{title}
\maketitle

\begin{abstract}
Feynman integrals are central to all calculations in perturbative Quantum Field Theory. 
They often give rise to iterated integrals of $d\log$-forms with algebraic arguments, which in many cases can be evaluated in terms of multiple polylogarithms. This has led to certain folklore beliefs in the community stating that  all such integrals evaluate to polylogarithms. Here we discuss a concrete example of a double iterated integral of two $d\log$-forms that evaluates to a period of a cusp form. The motivic versions of these integrals are shown to be algebraically independent from all multiple polylogarithms evaluated at algebraic arguments. 
From a mathematical perspective, we study a mixed elliptic Hodge structure arising from a simple geometric configuration in $\mathbb{P}^2$, consisting of a modular plane elliptic curve and a set of lines which meet it at torsion points, which may provide an interesting worked example from the point of view of periods, extensions of motives, and $L$-functions. 
\end{abstract}



\section{Physics context and summary of main results}
\label{sec:introduction}

\subsection{Feynman graphs and integrals}

Quantum Field Theory is among the main frameworks of the physics of our time, and the backbone of all computational techniques to compare theory and experiments in high-energy physics. The interactions among the quantum fields and states are encoded by correlation functions and on-shell scattering amplitudes. While it is not  in general  possible to compute these quantities exactly,  they can be expanded into a perturbative series in cases where the theory contains a small parameter. 

The perturbative series can be neatly organised in terms of Feynman graphs, and the $L$-th order in the perturbative expansion receives contributions from Feynman graphs with $L$ loops. The precise definition of a Feynman graph is not important for the purposes of this paper (see, for example,~\cite{Brown:2015fyf}). Here it suffices to say that to each Feynman graph one can associate a \emph{Feynman integral}, which depends on the dimension $d$ of space-time and is a function of the external kinematic data (e.g., the masses and momenta of all external particles).

Whenever it converges, a Feynman integral
defines a family of periods depending on kinematic parameters that is a generalisation of the notion of period in the sense of Kontsevich and Zagier~\cite{PeriodsZagier}. For algebraic values of the masses and momenta it is exactly a period in their  sense.  One can show that Feynman integrals can be promoted (at least when the masses and momenta are generic and $d$ is even) to `motivic periods' of the cohomology of a family of algebraic varieties  \cite{Brown:2015fyf}. The same is almost certainly true in all cases. 

\begin{rem}
Feynman integrals are often divergent and need to be  regularised. While various different regularisations exist, the most commonly used regularisation in physics is \emph{Dimensional Regularisation}~\cite{tHooft:1972tcz,Bollini:1972ui,Cicuta:1972jf}. Loosely speaking, it consists in replacing the space-time dimension $d$ by $D=d-2\eps$, where $\eps$ is a variable taking values in $\mathbb{C}$. One obtains in this way a meromorphic function of $\eps$~\cite{speer},
which admits a Laurent expansion around $\eps=0$. The objects of interest are the Laurent coefficients, which are then also families of  periods in the spirit of Kontsevich and Zagier~\cite{Bogner:2007mn}. In applications one is only interested in  the first few terms in the Laurent expansion, because only a finite number of Laurent coefficients contribute to the physical observable of interest (which must be finite and independent of the chosen regularisation). 
\end{rem}

\subsection{Iterated integrals and multiple polylogarithms} 
It is known that large classes of Feynman integrals can be evaluated in terms of iterated integrals. Let  $X$ be a smooth $m$-dimensional complex manifold. Let $\gamma:[0,1]\to X$ be a piecewise smooth path on $X$, and let $\omega_1,\ldots,\omega_n$ be smooth  one-forms on $X$. We denote the pullback of $\omega_i$ to the interval $[0,1]$ by $dt\,f_i(t) = \gamma^\ast\omega_i$. The iterated integral of the forms $\omega_1,\ldots,\omega_n$ along $\gamma$ is defined as
\beq\bsp
\int_{\gamma}&\,\omega_1\cdots\omega_n :=\int_{0\le t_1\le t_2\le \cdots\le t_n\le 1}\gamma^{\ast}\omega_1\wedge\ldots\wedge \gamma^{\ast}\omega_n\\
&= \int_{0\le t_1\le t_2\le \cdots\le t_n\le 1}dt_1\,f_1(t_1)\cdots dt_n\,f_n(t_n)\\
&=\int_0^1dt_n\,f_n(t_n)\int_0^{t_n}dt_{n-1}\cdots\int_0^{t_2}dt_1\,f_1(t_1)\,.
\esp\eeq
More generally, an iterated integral is any linear combination of such integrals.  The
empty iterated integral (when $n = 0$) is defined to be the constant function 1.
We will only be interested in homotopy-invariant iterated integrals, i.e., linear combinations that only depend on the homotopy class of the path $\gamma$ in $X$. All iterated integrals that appear in the computation of Feynman integrals are of this type, where $X$ denotes the complex points of a smooth algebraic variety over $\overline{\Q}$,  and all iterated integrals are $\overline{\Q}$-linear combinations of integrals of  forms $\omega_i$ which are globally defined logarithmic forms on $X$ which are defined over $\overline{\Q}$.
To spell this out in more detail, Feynman integrals  typically give rise to homotopy-invariant iterated integrals of forms $\omega_i $ which are holomorphic or of the form $d\log f$ where $f$ is a rational function on $X$. Frequently,  $X$ is an open subset of (a finite covering of)  an affine  space with coordinates $x_1,\ldots, x_n$, and one  can write   
$\omega_i = d\log R_i(x_1,\ldots,x_m)$, where $R_i(x_1,\ldots,x_m)$ is a rational (or algebraic) function. 

A particularly important representative of iterated integrals of $d\log$-forms are \emph{multiple polylogarithms} (also known as hyperlogarithms), which were first introduced in the works of Poincar\'e, Kummer and Lappo-Danilevsky~\cite{Kummer,Lappo:1927} and have recently reappeared in both mathematics~\cite{GoncharovMixedTate,Goncharov:1998kja,Brown:2011ik} and physics~\cite{Remiddi:1999ew,Gehrmann:2000zt,Ablinger:2011te}. Multiple polylogarithms can be defined as
\beq\label{eq:G_def}
I(a_1,\ldots,a_n;z) = \int_0^z\frac{dt}{t-a_1}\,I(a_2,\ldots,a_n;t)\,,
\eeq
where $a_i$, $z\in\mathbb{C}$, and the recursion starts at $I(;z)=1$. If $a_n=0$ the integral in~\eqref{eq:G_def} diverges, and we define instead  
\beq
I(\underbrace{0,\ldots,0}_{n\textrm{ times}};z) = \frac{1}{n!}\log^nz\,.
\eeq
Multiple polylogarithms are well-studied in mathematics and in physics. In particular, it is well understood how to perform algebraic manipulations of multiple polylogarithms.
In addition, there are several fast numerical implementations of these functions that can be used for their evaluation at high precision~\cite{Gehrmann:2001pz,Gehrmann:2001jv,Vollinga:2004sn,Buehler:2011ev,Frellesvig:2016ske,Ablinger:2018sat,Naterop:2019xaf}. Given the good algebraic and numerical control one has over multiple polylogarithms, it is often desirable to express Feynman integrals, scattering amplitudes and correlation functions in terms of multiple polylogarithms whenever possible. 

If an iterated integral is homotopy-invariant and the functions $R_i(x_1,\ldots,x_m)$ are rational functions of the variables $x_i$ (with coefficients in $\overline{\mathbb{Q}}$ say), and if the base point of the integration path is algebraic, then one can always write the integral in terms of multiple polylogarithms evaluated at algebraic arguments. Indeed, we can use homotopy-invariance and replace the path of integration by a homotopic path along the edges of a hypercube where all but one of the variables are constant. 
However, if the $R_i(x_1,\ldots,x_m)$ are not rational, no such algorithm exists. Nonetheless, many examples of iterated integrals of $d\log$-forms with non-rational arguments that have appeared in physics can be evaluated in terms of multiple polylogarithms (see, e.g.,~\cite{Heller:2019gkq}). This has led to folklore conjectures in the physics community that \emph{every} (iterated) integral of $d\log$-forms with algebraic arguments can be expressed in terms of multiple polylogarithms evaluated at algebraic points (at least in principle). The purpose of this paper is to show that this is false by  providing an explicit example of a double iterated integral of $d\log$-forms which cannot be expressed in terms of any linear combination of multiple polylogarithms evaluated at algebraic arguments (assuming the standard period conjecture). We highlight the implications for quantum field theory below, after a brief technical summary of our results.  

\subsection{Summary of results}
Let $\rho = -e^{2\pi i/3} = e^{-i\pi/3}$, and $\bar{\rho}$ denote its complex conjugate.
In \S\ref{sec:two_periods} we consider the iterated integrals:
\begin{align}
\label{eq:IE_def}
I_{\E} & = 2 \, \mathrm{Re} \int_{-1\leq x_1 \leq x_2\leq \infty}  \frac{1}{\rho- \overline{\rho}}   \,  d\log \left(\frac{x_1-\rho }{x_1- \overline{\rho}
  } \right)  \wedge  d\log\left(\frac{\sqrt{1+x_2^3}+1}{\sqrt{1+x_2^3}-1}\right)\,,\\
\label{eq:I_def}
I&=\int_{2\le x_1\le x_2\le\infty}\frac{1}{\rho-\bar{\rho}}\,d\log\left(\frac{x_1-\rho}{x_1-\bar{\rho}}\right)\wedge d\log\left(\frac{\sqrt{1+x_2^3}+1}{\sqrt{1+x_2^3}-1}\right)\,.
\end{align}

In order to interpret these integrals geometrically, consider the algebraic curve in $\mathbb{P}^2$ defined by the equation
\beq\label{eq:E_equation}
zy^2 = z^3+x^3\ .
\eeq
It defines an elliptic curve $\E$, so in particular  it is not possible to find any change of variables such that the argument of the logarithm in the integrands in~\eqref{eq:IE_def} and~\eqref{eq:I_def} becomes rational.  By constructing the underlying `motives' of these integrals (we shall use the word `motive' loosely to mean an object in a category of realisations which arises from the cohomology of an algebraic variety) and proving that they contain a non-trivial mixed elliptic extension, we prove in Corollary~\ref{cor:main_1} and~\ref{cor:main_2}  that the motivic versions $I_{\E}^{\mathfrak{m}}$ and $I^{\mathfrak{m}}$ of these integrals are algebraically independent from all motivic polylogarithms at algebraic points. It then follows from a version of  Grothendieck's period conjecture that $I_{\E}=\per(I_{\E}^{\mathfrak{m}})$ and $I=\per(I^{\mathfrak{m}})$ cannot be expressed in terms of multiple polylogarithms evaluated at any algebraic argument. 

The obstruction to being polylogarithmic is the same for both $I_{\E}^{\mathfrak{m}}$ and $I^{\mathfrak{m}}$. More precisely, we show in \S\ref{sec:I_pol} that there is a linear combination $I^{\mm}_{\mathrm{Pol}}$ of $I_{\E}^{\mathfrak{m}}$ and $I^{\mathfrak{m}}$: 
\beq\label{eq:IPol}
I^{\mm} = \frac{1}{6}\, I^{\mm}_{\mathcal{E}} +   I^{\mm}_{\mathrm{Pol}}\,,
\eeq
where $I^{\mm}_{\mathrm{Pol}}$ is a motivic period of a mixed Artin-Tate object, which numerically evaluates to a linear combination of  dilogarithms and logarithms. The obstruction itself is an extension of $H^1(\mathcal{E})$ by a certain Dirichlet motive $\Q_{\chi}(-2)$. The non-triviality of this extension is precisely detected  by the non-vanishing of the integral $I_{\E}$.   Furthermore, Beilinson's conjecture~\cite{Bec,Be} then predicts that $I_{\E}$, which is essentially the regulator, is proportional to the (non-critical) value at $2$ of the $L$ function of $\mathcal{E}$. Indeed, this is what we find numerically, and  could almost certainly be proven rigorously using the theory of iterated integrals of Eisenstein series (i.e., multiple modular values) as we now explain.

 A key point is that $\E$ defined by~\eqref{eq:E_equation} admits a modular parametrisation. Let $\Gamma(N)\subset \SL_2(\mathbb{Z})$ be the principal congruence subgroup of level $N$, $\mathfrak{H} = \{\tau\in\mathbb{C} : \textrm{Im }\tau>0\}$ the complex upper half-plane, and $Y(N)$  the modular curve obtained by taking the (orbifold) quotient of $\mathfrak{H}$ by the usual action of $\Gamma(N)$ via M\"obius transformations. There is an isomorphism $\varphi: Y(6) \to \E\backslash\mathcal{C}$, where $\mathcal{C}$ denotes a finite set of points.  The pullback $\varphi^* \omega$ of a logarithmic differential form $\omega$  on $\E$ with poles along $\mathcal{C}$ can be identified with a modular form of weight two for $\Gamma(6)$. The  holomorphic differential on $\E$ pulls back to the unique (normalised) cusp form of weight two, whereas $d\log$-forms pull back to linear combinations of Eisenstein series and this cusp form. As a result the integral $I$ in~\eqref{eq:I_def} can be expressed  (\S\ref{sec:modular}) as  a   double iterated integral
\beq\label{eq:I_Eisenstein}
I=\frac{1}{\rho-\bar{\rho}}\int_{0\le t_1\le t_2\le\infty}\frac{dt_1\wedge dt_2}{(2\pi)^2}\,E_1(it_1)\,E_2(it_2)\,,
\eeq
where $E_1(\tau)$ and $E_2(\tau)$ are certain  Eisenstein series of weight two for $\Gamma(6)$. In~\cite{Brown:mmv} it was shown that double iterated integrals of Eisenstein series of small weight for the full modular group $\Gamma(1)$ evaluate to multiple zeta values,  and periods of simple extensions of motives of cusp forms for $\Gamma(1)$. The latter include  non-critical  $L$-values  of cusp forms (amongst other quantities) and  first   appear when the sum of the modular weights of the two Eisenstein series is twelve, because the first cusp form for $\Gamma(1)$ has weight twelve. Since $\Gamma(6)$ has genus one, the first cusp form already appears in weight two:
\beq
f(\tau) = \eta(\tau)^4\,,
\eeq
where $\eta(\tau)$ is the Dedekind $\eta$ function. 
 By the general theory, we therefore expect the double iterated integral in~\eqref{eq:I_Eisenstein} to evaluate to a linear combination of multiple polylogarithms evaluated at sixth roots of unity and the value at $2$ of the completed $L$ function of the cusp form $f$: 
 \beq
\Lambda(f,2) = \int_{i\infty}^0d\tau\,f(\tau)\,\tau = 0.85718907492991773071685111\ldots\, 
\eeq
Using the PSLQ algorithm, we find (with $I_{\mathrm{Pol}} = \per(I^{\mm}_{\mathrm{Pol}})$):
\begin{align}
I_{\E} &= -4 \, \pi \sqrt{3}\,  \Lambda(f,2)\,,\\
I_{\mathrm{Pol}} &= \frac{5}{\sqrt{3}}\,\textrm{Cl}_2\left(\frac{\pi}{3}\right)\,,
\end{align}
where $\textrm{Cl}_2\left(\frac{\pi}{3}\right) = \textrm{Im }\textrm{Li}_2(e^{i\pi/3})$. These evaluations could be proven rigorously with a more detailed analysis: the first should follow from an application of a version of the Rankin-Selberg method to double Eisenstein integrals, as was done in \cite{Brown:mmv2} in level $1$. The second could be proven using, for example,  the theory of multiple elliptic polylogarithms and unipotent completion of the motivic fundamental group of the universal elliptic curve. Since these computations are quite lengthy and technical and are not required for the main  point of this paper, they are not presented here.

\subsection{Implications for Quantum Field Theory}
The integrals $I_{\E}$ and $I$ are explicit examples of integrals of $d\log$-forms that cannot be evaluated in terms of multiple polylogarithms at algebraic points. Here we discuss some implications  for perturbative Quantum Field Theory, because these integrals are concrete counter-examples to certain folklore  beliefs in the community stating that  all integrals of $d\log$-forms evaluate to polylogarithms. 

{\bf The role of the integration cycle.} An integrand that can be written in $d\log$-form with algebraic arguments is  insufficient for an integral to evaluate to multiple polylogarithms. Whether an integral evaluates to polylogarithms is not determined by the integrand alone, but also the integration cycle, which plays an important role. To illustrate this point, let us return to~\eqref{eq:IPol}:  whilst $I^{\mm}_{\mathcal{E}}$ and $I^{\mm}$ are examples of motivic periods  which  are algebraically independent from motivic polylogarithms, the period $\mathrm{per}\,(I^{\mm}_{\mathrm{Pol}})$ does evaluate to multiple polylogarithms, even though the periods of all three objects are integrals involving the same 
$d\log$-form in the integrand.\footnote{The fact that we only consider the real part of $I_{\mathcal{E}}$ immaterial here. See \S\ref{sec:Sec4} for the precise definition of the integration cycles used to compute $I_{\mathcal{E}}$ and $I$.} Thus, looking at the integrand alone is insufficient to decide if an integral can be evaluated in terms of multiple polylogarithms, and the choice of the integration cycle is  important. 

{\bf Canonical differential equations in \emph{d}log-form.} It follows from our examples that it is not clear that families of (dimensionally-regularised) Feynman integrals that satisfy a system of linear differential equations in `canonical $d\log$-form' (cf.~\cite{Henn:2013pwa}) can always be evaluated in terms of polylogarithms.  Instead, other classes of iterated integrals -- such as iterated integrals of modular forms or multiple elliptic polylogarithms -- may also show up even in the case of a differential equation in `canonical $d\log$-form'. This can happen whenever the $d\log$-forms involve algebraic arguments that cannot be rationalised via a suitable parametrisation of the external kinematic data (see, e.g.,~\cite{Besier:2018jen,Besier:2019kco} for a review). This situation is known to occur for example in Feynman integrals contributing to two-loop QED corrections to Bhabbha scattering~\cite{Henn:2013woa,Festi:2018qip} as well as for the two-loop mixed QCD-QED corrections to the Drell-Yan process~\cite{Bonciani:2016ypc,Besier:2019hqd}. In~\cite{Heller:2019gkq} it was shown that in four space-time dimensions these results can be expressed in terms of multiple polylogarithms depending on complicated algebraic functions of the external kinematic data. Our examples from the previous section show that this is not the rule, and there is no reason why the same should be true for other integrals that involve $d\log$-forms depending on square roots that cannot be rationalised, nor is there a reason why it should be true for the higher orders in the Laurent expansion in the dimensional regulator $\eps$ for the integrals considered in~\cite{Henn:2013woa,Bonciani:2016ypc,Heller:2019gkq}.

{\bf Planar $\mathcal{N}$ = 4 super Yang--Mills theory.} The examples of the previous section may also have implications on conjectures about the analytic structure of certain special Quantum Field Theories, like for example the planar $\cN=4$ super Yang-Mills theory. In~\cite{Arkani-Hamed:2016byb} it was argued that scattering amplitudes in this theory for certain assignments of the quantum numbers of the external states -- the so-called \emph{maximally-helicity-violating (MHV)} and \emph{next-to-MHV (NMHV)} amplitudes -- can be expressed in terms of polylogarithms for any number of loops and external particles. A central point in the argument is a (conjectural) procedure to write the loop integrand of these amplitudes in a form which only involves $d\log$-forms. For up to seven external particles, the arguments of these $d\log$-forms are rational functions obtained from cluster algebras (of finite type) associated to certain Grassmannian spaces, see, e.g.,~\cite{Golden:2013xva,Golden:2014xqa,Drummond:2017ssj,Drummond:2018dfd,Drummond:2019cxm,Golden:2019kks,Mago:2019waa,Arkani-Hamed:2019rds,Henke:2019hve,Arkani-Hamed:2020cig}. 
It is also known that two-loop MHV amplitudes for any number of external particles can be expressed in terms of polylogarithms~\cite{CaronHuot:2010ek,Golden:2014xqa}.
Starting from eight external particles, however, two-loop NMHV and three-loop MHV amplitudes involve $d\log$-forms with algebraic arguments~\cite{Prlina:2017azl,Prlina:2017tvx}. It is currently not known if one can parametrise the external kinematic data in a way which would rationalise these algebraic arguments (though it is known that in some cases all contributions from these algebraic arguments cancel in the full amplitude~\cite{Bourjaily:2019igt}). As a consequence, conjectures stating  that  MHV and NMHV amplitudes in planar $\cN=4$ Super Yang-Mills are always expressible in terms of polylogarithms should be taken with a pinch of salt: there is no firm supporting evidence, nor a counterexample, for this conjecture beyond seven particles.


\subsubsection*{Acknowledgements} The authors  thank Matija Tapu\v{s}kovi\'{c} for discussions and Johannes Broedel, Andrew McLeod and Lorenzo Tancredi for comments on the manuscript. The authors would also thank the organisers of the `Mathemamplitudes' workshop in Padova, where the ideas presented in this paper were first discussed, for the invitation to the workshop. 
This project has received funding from the European Research Council (ERC) under the European Union's Horizon 2020 research and innovation programme (grant agreement nos. 724638 and 637019).

\vskip 1cm

\section{Geometry \& Setup}

\subsection{Geometry}
We first consider the geometric situation underlying the integrals   \eqref{eq:IE_def} and \eqref{eq:I_def}. 
Let $k = \Q(\sqrt{-3}) \subset \C$ and let $\rho = - e^{2\pi i/3} \in k$.   Let $\mathcal{E}\subset \Pro^2$ denote the elliptic curve defined in projective coordinates 
$(x:y:z)$ by $zy^2=x^3+z^3$. Let $A\subset \Pro^2$ denote the divisor over $\Q$ given by the  union of the loci 
$$x^2 - xz+z^2 =0 \quad  , \quad  y-z=0 \quad, \quad y+z=0 \ .$$
Its extension of scalars $A_{k}= A \times_{\Q} k$  is a union of four lines $y= \pm z$, $ x= \rho z$, $x = \overline{\rho}z$ which cross normally.   Let $B$ denote the union of $\mathcal{E}$ with the line $x=2z$.

The elliptic curve $\mathcal{E}$ meets $x=\rho z$   at the point $(\rho:0 :1)$ with multiplicity two and at the point at infinity $\infty= (0:1:0)$, and similarly with $\rho$ replaced with $\overline{\rho}$.  It meets  the line $y=z$   at the point $(0:1:1)$  with multiplicity three, and the line  $y=-z$  at $(0:-1:1)$ with multiplicity three as well. 
 \begin{figure}[h!]
{\includegraphics[height=6cm]{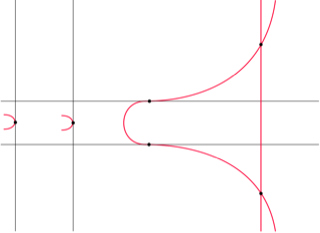}} 
\put(-236,-10){$x=\rho $}
\put(-192,-10){$x= \overline{\rho} $}
\put(-55,-10){$x=2$}
\put(2,95){$y=1$}
\put(-30,160){$\mathcal{E}$}
\put(-40,132){$(2,3)$}
\put(-40,32){$(2,-3)$}
\put(2,63){$y=-1$}
\put(-178,72){$\tiny{(\overline{\rho},0)}$}
\put(-222,72){$\tiny{(\rho,0)}$}
\put(-135,105){$(0,1)$}
\put(-135,50){$(0,-1)$}
\caption{A local picture of the  divisors $B$ (red) and the four components of  $A_k$ (black) in affine coordinates $(x,y)= (x:y:1)$. }
\end{figure}
\noindent
It follows that the points $( \rho,0 )$, $(\overline{\rho},0)$ (and also $(-1,0)$), are  torsion points of  order 2 in the group $\mathcal{E}(k)$, and  the points $(0, 1), (0, -1)$ have order 3. The points $(2,3)$, $(2,-3)$ have  order 6 since, for example,    $x+z=y$ intersects $\mathcal{E}$ at $(-1,0)$, $(0,1)$ and $(2,3)$. 

From now on let us denote by $\Sigma$  the following set of five torsion points
  \begin{equation} \label{SigmaDef} \Sigma=\{ \ (\rho:0:1) \ , \  (\overline{\rho}:0:1) \ , \  (0: 1: 1) \ , \ (0:- 1: 1) \ , \  \infty  \ \} 
  \end{equation} 
and write $\mathcal{E'} = \mathcal{E}\backslash (\mathcal{E} \cap A)$. Thus $\mathcal{E}'_k= \mathcal{E}_k \backslash \Sigma$, where $\mathcal{E}_k = \mathcal{E} \times_{\Q} k$ and likewise for $\mathcal{E}'.$
Let us also denote by $\pi_x,\pi_y$ the following two   projections 
\begin{equation} \label{pixy} \pi_x \ : \   \Pro^2 \backslash \{ (0:1:0)\}  \rightarrow \Pro^1 \quad \ , \quad \pi_y  \ : \ \Pro^2 \backslash \{ (1:0:0)\}   \rightarrow \Pro^1 \end{equation}
which map $(x:y:z)$ onto $(x:z)$ or $(y:z)$ respectively. Each   restricts  to a  projection from  $ \mathcal{E} \backslash\infty$ to the affine lines $\A^1$ defined by $z=1$. 

\begin{rem} The most arithmetically interesting motives typically arise from  singular configurations, and the above situation is a case in point.  In our example, it happens that the components of the divisor  $\mathcal{A} $  are fibers of $\pi_x$ or $\pi_y$ over  points where their restrictions to $\mathcal{E'}$ fail  to be \'etale.
For an elliptic curve  in Weierstrass form $y^2=  x^3 +ax+b$, this locus is given by
$$x^3+ax+b=0  \quad \hbox{ and } \quad  \  27 y^4 -54 \, b\,  y^2 + (  4 a^3 +27b^2)=0\ ,$$
which in our situation leads to $y=\pm 1$ and $x= \rho, \overline{\rho}, -1.$ This remark may  provide a  way to generate other interesting examples. 
\end{rem}

\subsection{Modular parametrisation} 
The elliptic curve $\E$ admits the following explicit  modular parametrisation by $\Gamma(6) \backslash \mathfrak{H}$.  Let $\tau$ be in the upper-half plane $\mathfrak{H}:= \{\tau\in \C: \mathrm{Im}(\tau)>0\}$ and let
$\eta(\tau)$ denote the Dedekind $\eta$ function, 
\beq
\eta(\tau) = e^{i\pi\tau/12}\prod_{n=1}^\infty(1-e^{2\pi i\tau n})\,
\eeq
which satisfies $\eta(1+ \tau) = e^{\frac{i \pi}{12}} \eta(\tau) $ and $\eta(-\tau^{-1}) = \eta(\tau) \sqrt{\frac{\tau}{i}}. $

Consider the following $\eta$-quotients:
\beq\bsp\label{eq:x6y6_def}
x_6(\tau)&\, = \frac{\eta(2\tau)\,\eta(3\tau)^3}{\eta(\tau)\,\eta(6\tau)^3}\,,\\
y_6(\tau) &\,= \frac{\eta(2\tau)^4\,\eta(3\tau)^2}{\eta(\tau)^2\,\eta(6\tau)^4}\,, 
\esp\eeq
which are modular invariant for $\Gamma(6)$. They satisfy the following relation, as can be checked  by computing the first few Fourier coefficients: 
\beq
y_6(\tau)^2 = 1+x_6(\tau)^3\,.
\eeq
Let $\varphi:\mathfrak{H} \to \mathbb{P}^2(\mathbb{C})$ denote the map  $\tau \mapsto (x_6(\tau):y_6(\tau):1)$. Its image is clearly contained in $\mathcal{E}(\C).$ The group $\Gamma(6)$ has twelve cusps, 
given by the classes of the points $(a:c)\in \Pro^1(\Q)$,  where $a,c$ are integers modulo the relation $(a:c) \sim (a':c')$ if $a,a'$ and $c,c'$ are congruent modulo $6$. Here is a set of representatives:
$$ 0  \ , \  1/3 \ , \   1/2  \ ,  \  2/3 \ , \   1   \ , \   3/2 \ , \    2 \ , \   5/2 \ , \   3 \ , \  4  \ , \  5 \ , \   i \infty  \ .$$
Using the automorphy properties of $\eta(\tau)$ under $\mathrm{SL}_2(\Z)$ and the fact that every  cusp is equivalent to $ i\infty$ under the action by M\"obius transformations of $\mathrm{SL}_2(\Z)$, one  checks that $\varphi$ extends continuously to the cusps where it takes, respectively, the following values:
$$  (2:3:1)  \  , \  (0:1:1)  \  ,  \  (\rho:0:1)   \ ,  \   (0:-1:1) \ , \  (-2 \overline{\rho}: -3:1)   \  , \ (-1:0:1)$$
$$    (-2\rho:3:1)  \ , \   (\overline{\rho}:0:1)  \ ,  \  (2 : -3:1) \  , \  (-2\overline{\rho}:3:1) \ ,\    (-2\rho:-3:1) \ , \  \infty= (0:1:0)\,.$$
Let $\cC\subset \mathcal{E}(k)$ denote these 12 points. It follows  that $\varphi$ has degree one on  $\Gamma(6) \backslash \mathfrak{H}$  and hence induces an isomorphism $\varphi: \Gamma(6) \backslash \mathfrak{H}\overset{\sim}{\rightarrow} \mathcal{E}(\C)\backslash \cC$ (e.g., ~\cite{Yang:2006kt}).

Observe that all the sets of special points on $\mathcal{E}$ in the discussion of the previous paragraph, and in particular the set $
\Sigma$, are contained in the set of cusps $\cC.$

The pull-back of the holomorphic one-form $-3dx/y$ under $\varphi$ is 
\beq\label{eq:holomorphic_pullback}
\varphi^*\left(-3\frac{dx}{y}\right) = 2\pi i\, d\tau\,f(\tau)\,.
\eeq
where, writing $q= e^{2\pi i \tau/6}$,
\beq
f(\tau) = \eta(\tau)^4 = q- 4 q^7 + 2q^{13} + 8 q^{19}- 5 q^{25}+ \ldots \,,
\eeq
  is the unique normalised  cusp form  of weight two on $\Gamma(6)$.
The pull backs of  logarithmic differentials of the third kind on $\mathcal{E}_k =\mathcal{E}\times_{\Q}k$ with poles along $\cC$ can be expressed as a $k$-linear combination of $f(\tau)$ and Eisenstein series  of weight two. 

\subsection{The $L$-function}
Let $L(f,s)=\sum_{n\geq 1 }  \frac{a_n}{n^s} =1 - \frac{4}{7^s} +\frac{2}{13^s}+\ldots$ denote the $L$-function associated to $f$, where $f(q) = \sum_{n\geq 1} a_n q^n$.
Its  completed version is 
$$\Lambda(f,s) =3^s     \pi^{-s} \Gamma(s) L(f,s)\ ,$$
and satisfies the functional equation $\Lambda(f,s)= \Lambda(f,2-s)$. One easily computes its numerical value at the  non-critical point $s=2$:
$$\Lambda(f,2)=   0.85718907492991773071685111\ldots \ .$$

\subsection{Interpretations}
We shall interpret the integrals in~\eqref{eq:IE_def} and~\eqref{eq:I_def} in several different ways, as: 
\begin{enumerate}
\item Multiple modular values, i.e., iterated integrals of modular forms of weight 2  along geodesic paths between cusps on the modular curve $\Gamma(6) \backslash \mathfrak{H}$, see  \S\ref{sec:modular}. These are periods of the relative completion (in this case, the unipotent completion in fact suffices) of the fundamental groupoid of the modular curve  between tangential base points.
\item  Multiple elliptic polylogarithms, i.e., iterated integrals   on the elliptic curve $\mathcal{E}'(\C)$ (periods of the unipotent completion of its fundamental groupoid).
\item Periods of the mixed Hodge structure associated to a  specific configuration of algebraic varieties such as the one described above. 
\end{enumerate}
 The relation between $(1)$ and $(2)$ comes about because the elliptic curve $\mathcal{E}$  is modular, and because unipotent completion is a special case of relative completion.  In order to understand the relation between $(2)$ and $(3)$, recall Beilinson's general construction which  associates a motive to the unipotent fundamental group.

 Recall that $\mathcal{E}'= \mathcal{E} \backslash \Sigma.$
  Iterated integrals of length two  between two distinct points $p,q$   of $\mathcal{E}'$ are periods of  $H^2 (\mathcal{E'} \times \mathcal{E'} , Y)$ where $Y=( \{p\} \times \mathcal{E'}) \cup \Delta \cup (\mathcal{E'} \times \{q\})$ and $\Delta$ is  the diagonal.  
     The projections \eqref{pixy} together  define a morphism 
$$ \pi_x \times \pi_y\ :  \  \mathcal{E'} \times \mathcal{E'}  \ \To \   \A^1 \times \A^1 \subset \Pro^2$$
which maps the diagonal $\Delta$ to the  embedded curve $\mathcal{E'}\subset \Pro^2$. Under this morphism, the  variety  $ \mathcal{E'} \times \mathcal{E'}$ maps to the complement of some lines in $\Pro^2$ which contains $A_k$, and $Y$  maps to a divisor which contains the   elliptic curve  $\mathcal{E}'\subset \Pro^2$ together with some further lines which are parallel to the coordinate axes. 
In this way, we are  naturally led to consider the relative cohomology of  geometric configurations in $\Pro^2$  very similar  to the one  described in the previous paragraph.

\subsection{The `motive'}
The divisor $A\cup B$, even after extension of scalars to $k$, is not normal crossing, since $x=\rho z$, $x= \overline{\rho}z$, $\mathcal{E}$ and $x=2z$ all meet at the point $\infty=(0:1:0)$. 
Therefore let 
$\pi: P \rightarrow \Pro^2$
denote the blow-up of $\Pro^2$ at the point $\infty$, and let $\widetilde{A}$ denote the strict transform of $A$, and $\widetilde{B}$ the total transform of $B$. 

The divisor $\widetilde{A}_k$ is simple normal crossing, and consists of the strict transforms of 
$y=z, y=-z$, which meet at $(1:0:0)$, and $x=\rho z$, $x=\overline{\rho}z$, which do not meet.  Their mutual intersections over $k$  are $(1:0:0)$ together with: 
\begin{equation} \label{4points} ( \rho: 1:1) \ , \ ( \rho: -1:1)\ , \ ( \overline{\rho}: 1:1) \ , \ (\overline{\rho}:-1:1)\ .
\end{equation}

\begin{figure}[h!]
{\includegraphics[height=5cm]{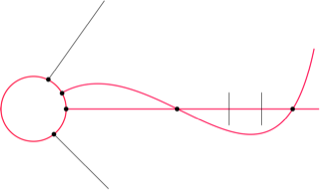}} 
\put(-2,100){$D_1$}
\put(-62,65){$D_2$}
\put(-162,10){$x=\overline{\rho}$}
\put(-170,120){$x=\rho$}
\put(-253,60){$D_3$}
\caption{A local picture of the blow-up $P$ at the point at infinity.  Not all intersections are shown. The divisor $\widetilde{B}$ is pictured in red.
The divisor $D_1$  (resp. $D_2$) is an open in the strict transform of $\mathcal{E}$, (resp. $x=2z$), and $D_3$ is an open in the exceptional locus.  }
\end{figure}

The object of study will be the relative cohomology:
\begin{equation}\label{Mdef}  M = H^2(P \backslash \widetilde{A} \  , \   \widetilde{B} \backslash \widetilde{B} \cap \widetilde{A})\ .
\end{equation} 
It defines an object in a category of systems of  realisations (we will mostly be concerned with Betti and de Rham cohomology) over $\Q$, since the pair $ (P \backslash \widetilde{A} \  , \   \widetilde{B} \backslash \widetilde{B} \cap \widetilde{A})$ is defined over  the rationals. 
 Alternatively, one can retrieve $M$ from  the object  
 \begin{equation} \label{Mkdef} 
 M_k =   H^2(P_k \backslash \widetilde{A}_k \  , \   \widetilde{B}_k \backslash \widetilde{B}_k \cap \widetilde{A}_k) \ ,
 \end{equation} 
in a  system of realisations over $k$, together with an action of $\mathrm{Gal}(k/\Q)$. 

\begin{rem} Note that $\widetilde{A} \cup \widetilde{B}$ is not simple normal crossing because the curve $\mathcal{E}$ meets $\widetilde{A}$ at points with multiplicity $>1$. One can, if one chooses, blow up $P$  successively at the points of intersection of $\mathcal{E}$ with $\widetilde{A}$ to obtain a simple normal crossing divisor, and define the motive to be the cohomology of this new space minus the total transform of $\widetilde{A}$, taken relative to the strict transform of $\widetilde{B}$.  Since the exceptional divisors are ultimately removed in this procedure, this has no effect on  the cohomology and the resulting motive is identical to $M$. 
\end{rem}

\section{Calculation of the `motive' $M$} The following calculations are valid in any reasonable cohomology theory (e.g., singular, or algebraic de Rham). 
We shall work in a version $\mathcal{H}$  of Deligne's category of realisations considered in particular in \cite{brownnotesmot} \S2. 
 The object  \eqref{Mdef}  defines an object (called simply a `motive' by abuse of terminology) in  $\mathcal{H}$ which is  a triple
  $$(M_B, M_{dR}, \comp)$$
  where  $M_B =  H^2 (  ( P \backslash \widetilde{A} )(\C) \  , \   (\widetilde{B} \backslash \widetilde{B} \cap \widetilde{A})(\C);\Q)$ is the singular cohomology group,   and $M_{dR} =  H_{dR}^2(P \backslash \widetilde{A} \  , \   \widetilde{B} \backslash \widetilde{B} \cap \widetilde{A};\Q)$ is the algebraic de Rham cohomology group associated to \eqref{Mdef}. They are $\Q$-vector spaces equipped with a weight  filtration  $W$  (and Hodge filtration $F$ on $M_{dR}$) and  $\comp: M_{dR} \otimes \C \cong M_B\otimes \C$ is the comparison isomorphism. The space $M_B$ is also equipped   with a real Frobenius  involution $F_{\infty}$ induced by complex conjugation on complex points.  This data is subject to a number of constraints - in particular $M_B$ has a   $\Q$-mixed Hodge structure.

We shall show that the  weight-graded (semi-simple) object associated  to $M$ is built out of the following simple objects  in the category $\mathcal{H}$: 
$$\Q(-2) \ ,  \ \Q_{\chi}(-1) \ , \   H^1(\mathcal{E})\ ,  \Q(-1) \ ,  \ \Q(0) \ , $$
where $\Q_{\chi}$ is a certain Dirichlet motive (defined below), $\Q(0)= H^0(\mathrm{Spec}(\Q))$, and  $\Q(-1) = H^1(\G_m)$.  
 Here, and subsequently,  $H^i(X)$ is shorthand for the object 
 $$H^i(X):= (H^i(X(\C)), H^i_{dR}(X), \mathrm{comp})$$ in the category $\mathcal{H}$ if $X$ is defined over $\Q$.  The notation $M(-n)$ stands for the Tate twist $M \otimes \Q(-1)^{\otimes n}$ as usual.

\subsection{An Artin motive}
Consider the rank two  Artin motive of weight zero:
$$\mathcal{A}=H^0\left(\mathrm{Spec}\, k\right)   \  . $$
   Via the morphism $\mathrm{Spec}\, k\rightarrow  \mathrm{Spec}\, \Q$, it splits  in $\mathcal{H}$ into a direct sum  $\mathcal{A} = \Q \oplus \Q_{\chi}$
 of the trivial motive and a Dirichlet motive $\Q_{\chi}$.  After extending coefficients to $k$, the motive $\Q_{\chi} \otimes_{\Q} k $  (in the category $\mathcal{H}\otimes_{\Q} k$)  becomes isomorphic to the trivial motive $\Q\otimes_{\Q} k$. 
We can interpret the Tate twist $\mathcal{A}(-1)$ of $\mathcal{A}$  as the object: 
\begin{equation} \label{Aminus1} \mathcal{A} (-1) =  H^1(\Pro^1 \backslash \{V(x^2-x+1) ,\infty\})  \  . \end{equation}
Note that $\Pro^1 \backslash \{V(x^2-x+1) ,\infty\})\cong \Pro^1 \backslash \{V(x^3+1)\})$
via the map $y\mapsto \frac{y+1}{2-y}$.

The de Rham realisation $\mathcal{A} (-1)_{dR}$ of \eqref{Aminus1}  is the 2-dimensional $\Q$-vector space generated by the cohomology classes of the forms
$$\frac{dx}{x^2-x+1} \quad    \quad  \hbox{ and }  \quad   \quad  \frac{(2x-1) \, dx}{x^2-x+1} = d\log (x^2-x+1) \ ,   $$
which generate $\Q_{\chi}(-1)_{dR}$ and $\Q(-1)_{dR}$  respectively.  The  Betti realisation $\mathcal{A} (-1)_{B}$ is the 2-dimensional $\Q$-vector space generated by classes of loops $\sigma_{\rho}, \sigma_{\overline{\rho}}$  winding positively around $\rho$, and $\overline{\rho}$ respectively.  Since the real Frobenius $F_{\infty}$ acts via 
$F_{\infty} [ \sigma_{\rho}] = - [  \sigma_{\overline{\rho}}]$, and $F_{\infty}$ acts on   $\Q(-1)_B$ by 
$-1$, we deduce that  $\Q_{\chi}(-1)_B$  is spanned by $[\sigma_{\rho}]- [\sigma_{\overline{\rho}}]$, which is $F_{\infty}$-invariant.  The period matrix of $\Q_{\chi}(-1)$ is therefore the $(1\times 1)$ matrix
$$ \left( \int_{\sigma_{\rho} - \sigma_{\overline{\rho}}} \, \frac{ dx}{x^2-x+1}  \right) = \left( \frac{4 \pi i }{3}\left(\overline{\rho}-\rho\right)  \right) =   \left( - \frac{4 \pi }{\sqrt{3}  } \right) \  ,   $$
with respect to the above bases. 

\begin{rem} For computations, it  will be  convenient to consider not the object $M$ in the category $\mathcal{H}$ but rather the object $M_k$  defined in   \eqref{Mkdef}
in a category $\mathcal{H}_k$ of realisations of objects over $k$. The object $M$ is retrieved from $M_k$ together with  the data of an action of $\mathrm{Gal}(k/\Q)$ on the components of $M_k$.  The essential difference with $\mathcal{H}$ is that the category $\mathcal{H}_k$  consists of triples $( (M_{\sigma})_{\sigma}, M_{dR}, (\comp_{\sigma})_{\sigma})$ where $M_{\sigma}$ are $\Q$-vector spaces for every embedding $\sigma:k \hookrightarrow \C$, and $M_{dR}$ is a finite-dimensional $k$-vector space, satisfying similar compatibilities to those considered before. There are two isomomorphisms
$\mathrm{comp}_{\sigma}: M_{dR} \otimes_{k,\sigma} \C \overset{\sim}{\rightarrow} M_{\sigma}\otimes_{\Q} \C$, one for each embedding $\sigma$
 of $k$. The natural functor $\mathcal{H} \rightarrow \mathcal{H}_{k}$
 sends $(M_B, M_{dR}, c)$ to  $((M_B)_{\sigma}, M_{dR} \otimes k, (c \otimes \sigma)_{\sigma})$.
 The simple objects 
$\Q(-n), H^1(\mathcal{E})$ in $\mathcal{H}$ correspond to objects in $\mathcal{H}_k$ with a trivial $\mathrm{Gal}(k/\Q)$ action, but  the  object $\Q_{\chi}$ corresponds to the trivial object $\Q$  in $\mathcal{H}_k$  equipped with a non-trivial action of
$\mathrm{Gal}(k/\Q)$.  This action is given by the usual (semi-linear) Galois  action on its de Rham component $M_{dR}=k$, and permutes the two Betti  components $M_{\sigma}$ and the two maps $c_{\sigma}$. 
\end{rem}

\subsection{Preliminary calculations} 

\begin{lem} \label{lem: prelim}
As an object of $\mathcal{H}$, we have:
$$ H^n(P \backslash \widetilde{A}) = \begin{cases}
\Q(0) \qquad \qquad\qquad \quad  \hbox{ if } n=0  \ , \\     \Q(-1)\oplus \Q_{\chi}(-1)  \qquad \hbox{ if } n=1  \\   \Q(-2)\oplus \Q_{\chi}(-2)  \qquad \hbox{ if } n=2   \ ,  
\end{cases} 
$$
and vanishes for $n\geq 3$. 
\end{lem}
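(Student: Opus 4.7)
The plan is to apply the long exact sequence in compactly supported cohomology for the closed embedding $\widetilde{A} \hookrightarrow P$, followed by Poincar\'e--Lefschetz duality $H^k(U) \cong H^{4-k}_c(U)^{\vee}(-2)$, where $U = P \setminus \widetilde{A}$. Since $P = \mathrm{Bl}_\infty\,\mathbb{P}^2$ is a smooth projective rational surface, $H^*(P)$ is pure Tate, namely $\Q(0), 0, \Q(-1)^2, 0, \Q(-2)$ in degrees $0,\ldots,4$, with $H^2(P)$ spanned by the class $[L]$ of a generic line and the class $[E]$ of the exceptional divisor.

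First I would compute $H^*(\widetilde{A})$, working over $k$ so that the four components $\widetilde{L}_+, \widetilde{L}_-, \widetilde{L}_\rho, \widetilde{L}_{\overline{\rho}}$ are individually defined. By the intersection data recalled just before \eqref{4points}, they meet transversally in exactly five nodes---the point $(1:0:0)$ together with the four points in \eqref{4points}---while $\widetilde{L}_\rho \cap \widetilde{L}_{\overline{\rho}}$ is empty because the blow-up has separated them at $\infty$. A standard stratification (or the dual-complex spectral sequence) then gives $H^0(\widetilde{A}_k) = \Q$, $H^1(\widetilde{A}_k) = \Q^2$ of weight zero (isomorphic to $H^1$ of the dual graph, which is $K_4$ minus one edge), and $H^2(\widetilde{A}_k) = \Q(-1)^4$.

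Since $\widetilde{L}_{\pm}$ avoid $\infty$ while $\widetilde{L}_\rho, \widetilde{L}_{\overline{\rho}}$ pass through it, the restriction $H^2(P) \to H^2(\widetilde{A}_k)$ sends $[L] \mapsto (1,1,1,1)$ and $[E] \mapsto (0,0,1,1)$ in the evident basis, hence is injective of rank two. Substituting into
$$\cdots \to H^k_c(U) \to H^k(P) \to H^k(\widetilde{A}_k) \to H^{k+1}_c(U) \to \cdots$$
yields $H^0_c(U) = H^1_c(U) = 0$, $H^2_c(U) = H^1(\widetilde{A}_k) = \Q^2$ of weight zero, $H^3_c(U) = \mathrm{coker}(H^2(P) \to H^2(\widetilde{A}_k)) = \Q(-1)^2$, and $H^4_c(U) = \Q(-2)$, with all other groups vanishing. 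Poincar\'e duality then gives the claimed ranks and Tate twists over $k$.

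It remains to descend from $\mathcal{H}_k$ to $\mathcal{H}$ by tracking the action of the non-trivial element $\sigma \in \mathrm{Gal}(k/\Q)$, which fixes $\widetilde{L}_{\pm}$ and swaps $\widetilde{L}_\rho \leftrightarrow \widetilde{L}_{\overline{\rho}}$. In the cokernel computing $H^3_c(U)$ the relations $[\widetilde{L}_+] + [\widetilde{L}_-] = 0$ and $[\widetilde{L}_\rho] + [\widetilde{L}_{\overline{\rho}}] = 0$ hold, so $\sigma$ fixes $[\widetilde{L}_+]$ and sends $[\widetilde{L}_\rho]$ to $-[\widetilde{L}_\rho]$; this exhibits the splitting $\Q(-1) \oplus \Q_\chi(-1)$, exactly mirroring the model motive in \eqref{Aminus1}. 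For $H^1$ of the dual graph, diagonalising the $\sigma$-action on an explicit basis of cycles produces eigenvalues $\pm 1$, giving $\Q \oplus \Q_\chi$; applying duality and the Tate twist converts this into $\Q(-2) \oplus \Q_\chi(-2)$. The main subtlety lies in this Galois bookkeeping; everything else reduces to standard manipulations in Betti cohomology.
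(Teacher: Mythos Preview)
Your argument is correct, and the Galois bookkeeping is fine: $\Q_\chi$ is self-dual since $\chi$ is quadratic, so Poincar\'e duality carries the $\pm 1$ eigenspaces you find in $H^2_c$ and $H^3_c$ to the stated decompositions of $H^2$ and $H^1$.

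The route, however, differs from the paper's. The paper works directly with the Gysin (residue) spectral sequence
\[
E_1^{-p,q} = \bigoplus_{|I|=p} H^{q-2p}(L_I)(-p) \ \Longrightarrow\ \gr^W_q H^{q-p}(P_k\backslash \widetilde{A}_k),
\]
whose rows are built from the cohomology of $P_k$, of the four lines, and of the five double points, with Gysin differentials. This computes $H^*(P_k\backslash\widetilde{A}_k)$ together with its weight filtration in one shot, and has the advantage that the $E_1$ terms are already indexed by logarithmic forms, so the explicit de Rham generators $[\pi^*\omega_x], [\pi^*\omega_y], [\pi^*(\omega_x\wedge\omega_y)]$ used immediately afterwards fall out of the same calculation. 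Your approach---the compactly-supported long exact sequence for $(P,\widetilde{A})$ followed by Poincar\'e duality---is essentially the dual computation: your simplicial resolution of $\widetilde{A}$ and the restriction map $H^2(P)\to H^2(\widetilde{A})$ encode the same incidence data as the paper's Gysin maps. What you gain is that no spectral sequence is needed (a single long exact sequence suffices) and the Galois action is read off transparently from the combinatorics of the dual graph; what you lose is the direct link to logarithmic differential forms, which you would have to supply separately if you wanted to identify the de Rham classes spanning $\Q_\chi(-1)_{dR}$ and $\Q_\chi(-2)_{dR}$.
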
 

\begin{proof}   We work for now  with objects in $\mathcal{H}_k$ equipped  with a $\mathrm{Gal}(k/\Q)$-action. 
From the formula for the cohomology of a blow-up,  the  odd degree cohomology of $P_k$ vanishes, and one has (as objects of $\mathcal{H}_k$)
$$H^0(P_k) = \Q(0) \ , \ H^2(P_k) = \Q(-1) \oplus \Q(-1) \ , \ H^4(P_k) = \Q(-2)\ . $$
The group $\mathrm{Gal}(k/\Q)$ acts trivially.  The divisor $\widetilde{A}_k$ is normal crossing in   the smooth proper scheme $P_k$, 
and  consists of 4 lines $L_1,\ldots, L_4$ meeting at 5 points given by the inverse images under $\pi$ of the four points \eqref{4points}  and  the point  $(1:0:0)$. For any subset $I\subset \{1,\ldots, 4\}$ let $L_I= \cap_{i\in I} L_i$. We set $L_{\emptyset} = P_k$.  A Gysin (residue) spectral sequence \cite{delignehodge2}  in the category $\mathcal{H}_k$ 
 has $E^{-p,q}_1= \bigoplus_{|I|=p}  H^{q-2p} (L_I)(-p)$ and converges to  $\gr^W_q H^{q-p}(P\backslash L)$. The differentials are the alternating sums of Gysin morphisms. Writing this out:
 $$\begin{array}{ccccc}
 \Q(-2)^{\oplus 5}  & \rightarrow     & \Q(-2)^{\oplus 4} & \rightarrow & \Q(-2)   \\
    &   & &&    \\
    &   & \Q(-1)^{\oplus 4} & \rightarrow & \Q(-1)^{\oplus 2}    \\
      &   &   && \\
      &   &   && \Q(0)  
\end{array}
$$
All zero entries have been omitted and in particular all rows with odd degrees are zero. 
The right-most column is the cohomology of $P_k$, the middle column the  cohomology of the union of one-dimensional strata $L_i$  with degrees shifted by 2, and the left-most column that of the  five points which constitute the codimension two strata in $\widetilde{A}_k$, with degrees shifted by $4$. 
The kernel of the map in the second row computes $\gr^W_2 H^1(P_k \backslash \widetilde{A}_k)$, which has rank $2$. 
This is because the middle row of the  right-most column is $H^2(P_k)$ which is generated by the fundamental class of a generic hyperplane (say $y=z$) and the exceptional divisor. Since the hyperplane $x=\rho z$ meets the exceptional divisor, the map in the second row is surjective and its kernel has rank 2. 
By proceeding in this way, or noting that $P_k \backslash \widetilde{A}_k$ is affine (which implies that its cohomology vanishes in degrees $\geq 3$, and so the top row of the previous diagram has all cohomology concentrated in the left-most column) 
we conclude that
$$H^0(P_k \backslash \widetilde{A}_k) =\Q(0) \ , \  H^1(P_k \backslash \widetilde{A}_k) = \Q(-1)^{\oplus 2} \ , \ H^2(P_k \backslash \widetilde{A}_k) =\Q(-2)^{\oplus 2} \ ,$$
as objects of  $\mathcal{H}_k$. The  group  $\mathrm{Gal}(k/\Q)$ permutes the two lines $x= \rho z$ and $x= \overline{\rho} z$, from which one deduces its action on $H^1$ and $H^2$ and gives the stated formula. 
\end{proof}
In algebraic de Rham cohomology, $H_{dR}^1(P \backslash \widetilde{A})$ is generated by the classes of the pullbacks under $\pi^*$ of the logarithmic one-forms
\begin{equation} \label{omegaydef}  \omega_y = d\log (y+z) - d \log (y-z) \ ,\end{equation} 
where  $\pi: P \rightarrow \Pro^2$ is the blow-up,
and  
\begin{equation} 
\label{omegaxdef} \omega_x  =   \frac{z dx - x dz}{x^2-xz+z^2} = \frac{1}{\rho- \overline{\rho}} \left( d\log (x-  \rho z) -  d\log (x- \overline{\rho} z) \right)\ , 
\end{equation} 
where $[\pi^* \omega_x]$
 generates $\Q_{\chi}(-1)_{dR}$ 
and  $[\pi^* \omega_y]$ generates $\Q(-1)_{dR}$ in $H^1(P \backslash \widetilde{A})$.  
  The class of 
 $[ \pi^*(\omega_y \wedge \omega_x)]$ generates the copy of $\Q_{\chi}(-2)_{dR}$ in $H^2(P \backslash \widetilde{A})$.

 \subsection{Face maps}
  The divisor $\widetilde{B}_k\backslash \widetilde{A}_k$ is simple normal crossing  with three smooth  components which  are the strict transforms of:
 $$(D_1)_k=  \mathcal{E}'_k= \mathcal{E}_k \backslash \Sigma$$
 where $\Sigma$ was defined in \eqref{SigmaDef}; 
  and
 $$(D_2)_k= \Pro_k^1 \backslash \{ (2:1:1)  , (2:-1:1)   \}\ ,$$
 which corresponds to the line $\{x=2z\}$ which meets $A$ along $y=\pm z$; together with 
 the  inverse image of the  exceptional divisor, which   is isomorphic to $$(D_3)_k=\Pro_k^1 \backslash \{\rho, \overline{\rho}\}\ .$$ They are the extension of scalars to $k$ of three divisors $D_1,D_2,D_3$ over $\Q$ which meet each other as depicted in figure 2 (for example, $D_1$ is the strict transform of $\mathcal{E} \backslash (\mathcal{E} \cap \mathcal{A})$, and $D_3= \Pro^1 \backslash V(x^2-xz+z^2)$.)
  
 For each  $i=1,2,3$  there are `face maps' \cite{brownnotesmot}, \S10.3: 
\begin{equation}\label{facemapi}  f_i :  H^1(D_i, D_{ij} \cup D_{ik}) \To M
\end{equation}
where $\{i,j,k\}=\{1,2,3\}$ and $D_{pq} = D_p \cap D_q$.

 \begin{lem} Since $D_1,D_2, D_3$ are  over $\Q$, they define the following  objects of $\mathcal{H}$:
 $$H^1(D_1) \cong H^1(\mathcal{E}) \oplus \Q^{\oplus 3} (-1) \oplus \Q_{\chi}(-1)\ ,$$ 
 whereas 
$H^1(D_2) = \Q(-1)$ and $H^1(D_3) = \Q_{\chi}(-1)$. Their respective $H^2$'s all vanish. 
 \end{lem}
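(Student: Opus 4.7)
The plan is to apply the Gysin (localisation) sequence in the category $\mathcal{H}$ to the open immersions $D_i \hookrightarrow \overline{D}_i$, where $\overline{D}_1 = \mathcal{E}$ and $\overline{D}_2 = \overline{D}_3 = \Pro^1_{\Q}$, with finite complements $Z_i = \overline{D}_i \setminus D_i$. Each $Z_i$ is zero-dimensional, so $H^0(Z_i)$ is an Artin motive which one determines by separating $\Q$-rational punctures from Galois-conjugate pairs of degree two.

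First I would write down the long exact sequence in $\mathcal{H}$
\begin{equation*}
0 \To H^1(\overline{D}_i) \To H^1(D_i) \To H^0(Z_i)(-1) \overset{\mathrm{tr}}{\To} H^2(\overline{D}_i) \To H^2(D_i) \To 0.
\end{equation*}
Since each $D_i$ is smooth affine of dimension one, $H^2(D_i)$ vanishes by Artin vanishing, proving the second assertion of the lemma. The sequence then presents $H^1(D_i)$ as an extension of $\ker(\mathrm{tr})$ by $H^1(\overline{D}_i)$; since $\ker(\mathrm{tr}) \subset H^0(Z_i)(-1)$ has weight $2$ whereas $H^1(\overline{D}_i)$ has weight $1$, this extension splits canonically by weight reasons, so it remains to identify the kernel of the trace in each case.

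The heart of the argument is the Artin-motive decomposition of $H^0(Z_i)$, recording its $\mathrm{Gal}(k/\Q)$-action. For $D_2$, both punctures $(2{:}\pm 1{:}1)$ are $\Q$-rational, so $H^0(Z_2)(-1) \cong \Q(-1)^{\oplus 2}$, the trace is the sum, and its kernel is $\Q(-1)$. For $D_3$, the punctures $\rho,\overline{\rho}$ form a single closed point of degree two with residue field $k$, giving $H^0(Z_3) \cong \mathcal{A} = \Q \oplus \Q_{\chi}$; the trace kills $\Q_{\chi}$ and is an isomorphism on $\Q$, so its kernel is $\Q_{\chi}(-1)$. For $D_1$, the five points of $\Sigma$ split into three $\Q$-rational points $(0{:}\pm 1{:}1),\infty$ and one degree-two closed point $\{(\rho{:}0{:}1),(\overline{\rho}{:}0{:}1)\}$, so $H^0(\Sigma)(-1) \cong \Q(-1)^{\oplus 4} \oplus \Q_{\chi}(-1)$; the trace is the sum on $\Q(-1)^{\oplus 4}$ and vanishes on $\Q_{\chi}(-1)$, so its kernel is $\Q(-1)^{\oplus 3} \oplus \Q_{\chi}(-1)$. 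Combined with $H^1(\overline{D}_1) = H^1(\mathcal{E})$ and $H^1(\Pro^1) = 0$, this yields the stated formulas. The main obstacle is simply the Galois-equivariant bookkeeping: recognising the degree-two closed points as copies of $\mathcal{A}$ is precisely what produces the $\Q_{\chi}(-1)$ summands.
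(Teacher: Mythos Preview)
Your Gysin sequence setup, the Artin--motive bookkeeping for the punctures, and the affine vanishing of $H^2$ are all fine and match the paper's approach. The gap is in the splitting step for $D_1$.

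You claim that the extension
\[
0 \To H^1(\mathcal{E}) \To H^1(D_1) \To \ker(\mathrm{tr}) \To 0
\]
splits ``canonically by weight reasons'' because the subobject has weight $1$ and the quotient weight $2$. This is false: in the category $\mathcal{H}$ (or already in mixed Hodge structures) one has $\mathrm{Ext}^1(\Q(-1),H^1(\mathcal{E})) \cong \mathrm{Ext}^1(\Q(0),H^1(\mathcal{E})(1)) \neq 0$, and such an extension is non-split precisely when the removed points define a non-torsion class in $\mathcal{E}$. Concretely, if you delete a single non-torsion point $p$ from $\mathcal{E}$, the resulting Gysin extension of $\Q(-1)$ by $H^1(\mathcal{E})$ does not split. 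Weight considerations only force $\mathrm{Ext}^1(A,B)=0$ when the weights of $A$ are \emph{at most} those of $B$, which is the wrong way round here.

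The paper's proof closes this gap by invoking the Manin--Drinfeld theorem: the points of $\Sigma$ are all torsion on $\mathcal{E}$ (equivalently, cusps under the modular parametrisation by $Y(6)$), and in that situation Hecke operators furnish a splitting. You need to add this ingredient; without it the direct-sum decomposition of $H^1(D_1)$ is unjustified.
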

 \begin{proof}  We first work in the category $\mathcal{H}_k$. 
 The cohomology of $(D_1)_k$ is given by a Gysin (residue) sequence:
 $$ 0 \To H^1(\mathcal{E}_k)  \To H^1((D_1)_k) \To   \widetilde{H}_0(\Sigma)(-1) \cong \Q(-1)^{\oplus 4}  \To 0    $$
 where the $\Q(-1)$ on the right are objects of $\mathcal{H}_k$.  The previous sequence splits by   the Manin-Drinfeld theorem, since the points removed from $\mathcal{E}_k$ are cusps (a splitting is provided by the action of Hecke operators).  The first statement follows
 since $\mathrm{Gal}(k/\Q)$  acts trivially on  the points  $(0:1:0)$, $(0:\pm 1:1)$,  but permutes $ (\rho:0:1)$ and  $( \overline{\rho}:0:1)$. Since $D_1,D_2,D_3$ are affine, their cohomology in degrees 2 and above vanish. The second statement follows from $D_2 \cong  \G_m$, and 
 $(D_3)_k= \Pro^1 \backslash \{\rho, \overline{\rho}\}$. 
 \end{proof} 

 \subsubsection{Computation of $M$}  We use the well-known relative cohomology spectral sequence in the category $\mathcal{H}$. It satisfies:
   $$E_1^{p,q} = \bigoplus_{|I|=p}  H^q(D_I)    \quad \Longrightarrow \quad  H^{p+q} \left(P \backslash \widetilde{A},   \widetilde{B} \backslash \left(  \widetilde{B} \cap \widetilde{A}\right) \right)     $$
   where $D_i$ for $i\in \{1,2,3\}$ denote the affine schemes  above, $D_{\emptyset} =  P \backslash \widetilde{A}$ and for every non-empty  subset $I \subset \{1,2,3\}$, we write $D_I = \cap_{i \in I} D_i$. The differentials are given by signed sums of restriction maps. 
\begin{prop}  The weight-graded pieces of $M$ are:
\begin{equation} \label{weightgradedM} \gr^W M =  \Q_{\chi}(-2)\oplus  \Q(-2) \oplus \Q_{\chi}(-1) \oplus \Q(-1)^{\oplus 3} \oplus H^1(\mathcal{E})  \oplus \Q(0)^{\oplus 2}\ .
\end{equation}
More precisely, we have $M=W_4M$, $\gr^W_3 M=0$,  and 
$$ M/W_2 M  \cong \gr^W_4 H^2  (P \backslash \widetilde{A}) =   \Q_{\chi}(-2)\oplus  \Q(-2) \ .$$
Its  weight two part splits into a direct sum
\begin{equation} \label{W2Msplit} W_2 M  \cong H^1(\mathcal{E}) \oplus T\ ,
\end{equation}
where $T$ is an extension:
$$0 \To \Q(0)^{\oplus 2} \To T \To \Q(-1)^{\oplus 3} \oplus \Q_{\chi}(-1) \To 0\ .$$
  \end{prop}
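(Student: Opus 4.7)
The plan is to extract $\gr^W M$ from the $E_2$ page of the spectral sequence already set up in the excerpt. Combining the preliminary lemmas with a direct count of intersection points, I would fill in the $E_1^{p,q}$ in positions with $p+q\le 3$: we obtain $E_1^{1,0}=\Q(0)^{\oplus 3}$ from the three connected components $D_i$; the two $\Q$-rational points $(2,\pm 3)$ of $D_{12}$, the single point of $D_{13}$ cut out of the exceptional divisor by the tangent direction to $\mathcal{E}$ at $\infty$ (namely $v=0$ in the blow-up chart $x'=u$, $z'=uv$), and the single point of $D_{23}$ cut out by the slope of the line $x=2z$ (namely $v=1/2$) contribute $E_1^{2,0}=\Q(0)^{\oplus 4}$; and $D_{123}=\emptyset$ since the two slopes differ, so $E_1^{3,0}=0$.

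The nontrivial $d_1$-differentials are then computed explicitly. The \v{C}ech map $\Q(0)^{\oplus 3}\to\Q(0)^{\oplus 4}$ sends $(c_1,c_2,c_3)\mapsto(c_1-c_2,c_1-c_2,c_1-c_3,c_2-c_3)$, whose image has rank $2$, giving $E_2^{2,0}=\Q(0)^{\oplus 2}$ and $E_2^{1,0}=0$. For the restriction $d_1\colon E_1^{0,1}\to E_1^{1,1}$, I would use the generators $\omega_x,\omega_y$ of \eqref{omegaxdef}--\eqref{omegaydef}. A direct blow-up computation gives $\pi^*\omega_x=-dv/(v^2-v+1)$, which restricts to a nonzero element of $H^1(D_3)=\Q_\chi(-1)$, while $\pi^*\omega_y=2v\,du/(1-u^2v^2)$ restricts to zero on the exceptional divisor $\{u=0\}$. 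Conversely $\omega_y$ restricts to the generator of $H^1(D_2)=\Q(-1)$ (viewing $D_2$ as $\mathbb{A}^1\setminus\{\pm 1\}$), while $\omega_x|_{D_2}=0$. Hence $d_1$ is injective, its image is exactly the copies of $\Q(-1)$ and $\Q_\chi(-1)$ coming from $D_2$ and $D_3$, and the cokernel is $E_2^{1,1}=H^1(\mathcal{E})\oplus\Q(-1)^{\oplus 3}\oplus\Q_\chi(-1)$. Since all $E_2^{p,q}$ on the neighbouring diagonals $p+q=1$ and $p+q=3$ vanish, no further nonzero $d_r$ can touch the diagonal $p+q=2$, and the spectral sequence degenerates at $E_2$.

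Reading off the abutment filtration gives $M/W_2M=\gr^W_4 H^2(P\setminus\widetilde{A})=\Q(-2)\oplus\Q_\chi(-2)$ together with an extension $0\to\Q(0)^{\oplus 2}\to W_2M\to H^1(\mathcal{E})\oplus\Q(-1)^{\oplus 3}\oplus\Q_\chi(-1)\to 0$. The last and most delicate step is the splitting $W_2M\cong H^1(\mathcal{E})\oplus T$, which I expect to be the main obstacle. This reduces to splitting the weight sub-extension $0\to W_0M\to W_1M\to H^1(\mathcal{E})\to 0$, since any such splitting lifts $H^1(\mathcal{E})$ into $W_2M$, and the quotient $T=W_2M/H^1(\mathcal{E})$ then realises the stated extension of $\Q(-1)^{\oplus 3}\oplus\Q_\chi(-1)$ by $\Q(0)^{\oplus 2}$. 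In Hodge realisation the class of this sub-extension comes via the Abel--Jacobi map from a divisor on $\mathcal{E}$ supported on points of $\Sigma$, all of which are cusps for the modular parametrisation $\varphi\colon\Gamma(6)\backslash\mathfrak{H}\to\mathcal{E}\setminus\mathcal{C}$. The Manin--Drinfeld theorem therefore forces the class to be torsion in $\mathcal{E}(\overline{\Q})$, and hence to vanish in $\mathrm{Ext}^1_{\mathcal{H}}$ since $\mathcal{H}$ has $\Q$-coefficients; the splitting follows.
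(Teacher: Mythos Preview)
Your spectral sequence computation is correct and matches the paper's approach closely; your explicit blow-up chart and \v{C}ech differential are nice additions. (One minor slip: $\pi^*\omega_y = 2(v\,du + u\,dv)/(1-u^2v^2)$, not just $2v\,du/(1-u^2v^2)$, though the restriction to $\{u=0\}$ is still zero so your conclusion is unaffected.)

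The genuine gap is in the splitting $W_2M\cong H^1(\mathcal{E})\oplus T$. Splitting the sub-extension $0\to W_0M\to W_1M\to H^1(\mathcal{E})\to 0$ only exhibits $H^1(\mathcal{E})$ as a \emph{subobject} of $W_2M$, not a direct summand. Setting $T=W_2M/H^1(\mathcal{E})$ gives a short exact sequence $0\to H^1(\mathcal{E})\to W_2M\to T\to 0$, and you still have to show this splits. Since $T$ has a weight-$2$ quotient $\gr^W_2T=\Q(-1)^{\oplus 3}\oplus\Q_\chi(-1)$, the relevant $\mathrm{Ext}^1(T,H^1(\mathcal{E}))\cong\mathrm{Ext}^1(\gr^W_2T,H^1(\mathcal{E}))$ is \emph{not} automatically zero: extensions of $\Q(-1)$ by $H^1(\mathcal{E})$ are classified by points of the Jacobian, and one again needs Manin--Drinfeld (now for the weight $1$/weight $2$ extensions coming from the punctures in $\Sigma$) to kill them. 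Your Abel--Jacobi argument as written only addresses the weight $0$/weight $1$ piece.

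The paper circumvents this by a different route: it uses the surjectivity of the face maps $\bigoplus_i H^1(D_i, D_{ij}\cup D_{ik})\twoheadrightarrow W_2M$ (which follows from affineness of $P\backslash\widetilde{A}$), applies Manin--Drinfeld once to split $H^1(\mathcal{E})$ off the source $H^1(D_1, D_{12}\cup D_{13})$ --- an object that already contains weight $0$, $1$, \emph{and} $2$ pieces --- and then observes that the splitting descends to the quotient. The last step works because the kernel of the face-map surjection has only Artin--Tate (hence even-weight) graded pieces, so its projection to the weight-$1$ summand $H^1(\mathcal{E})$ is forced to vanish. This single application of Manin--Drinfeld to the full relative $H^1$ of the boundary stratum is what your argument is missing.
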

  
  \begin{proof}  We can work in $\mathcal{H}$. 
   The first page of the spectral sequence is
    $$\begin{array}{ccccc}
  \Q_{\chi}(-2)   
\oplus \Q(-2)    & \rightarrow     & 0  & \rightarrow &  0  \\
\Q_{\chi}(-1)  \oplus   \Q(-1)   & \rightarrow  & H^1(\mathcal{E}) \oplus \Q_{\chi}(-1)^{\oplus 2} \oplus \Q(-1)^{\oplus 4} &  \rightarrow & 0   \\
   \Q(0)    & \rightarrow  &  \Q(0)^{\oplus 3}   & \rightarrow & \Q(0)^{\oplus 4}  
\end{array}
\ .$$
The column on the far left is given by lemma \ref{lem: prelim}. The column on the far right is the cohomology of the union of the pairwise intersections $D_i \cap D_j $, which consists of 4 points. 
The structure \eqref{weightgradedM} follows from the fact that the left-most differential in the middle row is injective (for example, one can check  that the classes $[\pi^* \omega_x]$ and $[\pi^* \omega_y]$ restrict to non-trivial classes in the de Rham cohomology of $D_3$, and $D_2$ respectively).  
  Next, by taking the quotient by $W_2$ in the  natural map $M \rightarrow H^2  (P \backslash \widetilde{A})$ we obtain the second statement.   Now, since $P\backslash \mathcal{A}$ is affine, we know by \cite{brownnotesmot} proposition 10.7 (or by inspection of the spectral sequence above) that the sum of the face maps is surjective. In other words, the map
  $$ \sum_i f_i:  \bigoplus_{i=1}^3 H^1(D_i, D_{ij} \cup D_{ik}) \To W_2 M$$
  is surjective, where, in the above sum, $j,k$ are chosen such that $\{i,j,k\}=\{1,2,3\}$. In particular $W_2M$ is a quotient of this direct sum. To obtain the splitting \eqref{W2Msplit}, it suffices to show that $H^1(\mathcal{E})$ is a summand of:
  $$H^1(D_1 , D_{12} \cup D_{13})\ ,$$
which follows again from the  Manin-Drinfeld theorem since $D_{12}, D_{13}$ correspond to cusps on $\mathcal{E}$.   One can also prove this fact by direct application of Hecke operators to $H^1(D_1 , D_{12} \cup D_{13})$.  
 It follows from this that $H^1(\mathcal{E})$ is a summand in $W_2 M$. 
  
  We conclude that $W_2M = H^1(\mathcal{E}) \oplus T$ for some object $T$ of $\mathcal{H}$ whose weight-graded pieces are Tate or of the form $\Q_{\chi}(-1)$.  That it is an extension of the stated form follows from \eqref{weightgradedM}.
  \end{proof} 
 Consider  the exact sequence 
  $$0 \To W_2 M \To M \To \Q(-2) \oplus \Q_{\chi}(-2)\ .$$
  We can pull it  back  to a simpler object $N\rightarrow M$, which sits in an exact sequence
\begin{equation} \label{Ndefn}
0 \To W_2 M = W_2N \To N  \To \Q_{\chi}(-2)\To 0 \ .
\end{equation} 
  By \eqref{W2Msplit}, this extension can in turn  be pushed out  to a simple extension:
  \begin{equation} \label{N1ext}  0 \To H^1(\mathcal{E}) \To N_1 \To \Q_{\chi}(-2) \To 0
  \end{equation} 
  and a biextension of the form 
  \begin{equation} \label{N2ext}  0 \To T \To N_2 \To \Q_{\chi}(-2) \To 0 \ .
  \end{equation} 
  The Hodge numbers of $N_2$ are of Tate type.
  %
  \subsection{The motivic periods}  Consider the form
  \begin{equation} \label{omegadef} \omega = \omega_x \wedge \omega_y =   \frac{1}{\rho- \overline{\rho}} \,  d\log \left(\frac{x-\rho z}{x- \overline{\rho}
  z} \right)\wedge d \log \left(\frac{y+z}{y-z}\right)
  \end{equation}
  Its restriction to the affine chart $z=1$ is
  $$ \omega\big|_{z=1} = \frac{2\, dx \wedge  dy }{(x^2-x+1)(1-y^2)} \ .$$ 
  It defines a cohomology class $[\pi^* \omega] \in F^2 M_{dR}$ whose image in $H^2(P \backslash \widetilde{A})$ spans the copy of $\Q_{\chi}(-2)_{dR}$. 
 Given any relative homology class $[\sigma] \in M_B^{\vee}$, we can consider the motivic period defined by the matrix coefficient (\cite{brownnotesmot} \S2): 
 $$ \xi= [ M, [\sigma], [\pi^* \omega]]^{\mm} \ .$$
 Its image under the period homomorphism is the period
 $$\mathrm{per}  \, \xi=   \int_{\sigma} \omega \ .$$
 If the class $\sigma$ is  invariant  (resp. anti-invariant) under $F_{\infty}$ then the associated period is real (imaginary). We shall mainly consider two examples of real periods. 
 
 \section{Relative homology classes in $M_{B}^{\vee}$}
 \label{sec:Sec4}
  \subsection{Frobenius-invariant chains} Recall that the Betti component $M_B$ of an object $M$ in $\mathcal{H}$ comes with an action of the real Frobenius $F_{\infty}$. 
  Since it acts on $H_B^1(\mathcal{E})$ with eigenvalues $+1$ and $-1$, 
  it follows from \eqref{weightgradedM} and the definition of $N$ that the $+$ eigenspace for the action of  $F_{\infty}$  on $N_B$ has dimension $4$. Each eigenspace comes from the Betti component of a weight-graded piece $\Q(0)^{\oplus 2}, \Q_{\chi}(-1)$ and $H^1(\mathcal{E})$ in \eqref{weightgradedM}.    
In particular,  since $F_{\infty}$ acts via $-1$ on $\Q_{\chi}(-2)_B$, we have
 \begin{equation} \label{NBplussplit} 
 N_B^+ = W_2 N_B^+ = W_2 M_B^+
 \overset{\eqref{W2Msplit}}{=}
  H^1(\mathcal{E})^+_B \oplus  T_B^+\ .
  \end{equation} 
  Here, a superscript $\pm$ refers to corresponding eigenspace under $F_{\infty}$. 
  Let $[\sigma_{\mathcal{E}}] \in (N_B^+)^{\vee}$ denote the image of a generator of  the Frobenius-invariant part of the singular homology 
  $H_1(\mathcal{E}(\C))^+$ of the elliptic curve $\mathcal{E}(\C)$. 
  
  We first discuss how to obtain relative homology classes from paths, before writing down  a representative for the class $[\sigma_{\E}]$ explicitly. 
  
  \subsection{Paths and  relative homology classes}
  The periods we wish to consider  are iterated integrals of logarithmic one-forms along  paths in $\mathcal{E}(\C)$. We now explain how these paths define relative homology classes in  $M_B^{\vee}$.  
  
 The  projection $\pi_x$ extends to a double covering $\mathcal{E}\rightarrow \Pro^1$  by sending $(0:1:0)$ to the point at infinity. It is     ramified  at $\infty$ and cube roots of $-1$.
  The image of  $\Sigma$ \eqref{SigmaDef} are the points 
  $\{0, \rho, \overline{\rho}, \infty\}$. Consider any continuous  path 
   $$\gamma:(0,1)  \rightarrow \Pro^1(\C) \backslash \{0,-1,\rho, \overline{\rho}, \infty\}\ , $$  which extends to a continuous path $\gamma:[0,1] \rightarrow \Pro^1(\C)$ with the property that $\gamma(0) \in \{2, \infty\}$ and $\gamma(1)=\infty$.  Such a path,  together with the data of a determination of $\sqrt{x^3+1}$ at any point  $\gamma(t)$ for $0< t < 1$ defines a path on $\mathcal{E}(\C)$ whose endpoints are contained in the set  $\{(2:3:1), (2:-3:1), \infty\}$. The latter are the points of intersection of $\mathcal{E}$ and $x=2z$ (i.e., the dimension 0 strata of the divisor $B$). 
   
  \subsubsection{Chains constructed from  paths} 
   Given $\gamma$ as above, 
    consider         
    the singular 2-chain $p(\gamma)$ defined by the   map
       \begin{eqnarray} \{0< t_1 < t_2 < 1\} & \To &   \Pro^2(\C) \nonumber \\
       (t_1,t_2)  & \mapsto &  \left(\gamma(t_1) : \sqrt{\gamma(t_2)^3+1} :1\right)  \nonumber 
       \end{eqnarray}  
        where the determination of the square root is uniquely determined from the defining data by analytic continuation along  $\gamma$.  Denote by $\widetilde{p}(\gamma) =  \overline{\pi^{-1} (p(\gamma))}$ 
         the closure in the analytic topology of the inverse image of $p(\gamma)$ under $\pi:P(\C)\rightarrow \Pro^2(\C)$. 
         Since $\gamma$ avoids $0$ and  the three cube roots of $-1$, it follows that $\widetilde{p}(\gamma)$
         does not meet $A(\C)$.  Its boundary  is contained in the locus $B(\C)$
       by assumption on the  endpoints of $\gamma$:  the boundary component corresponding to $t_1=t_2$ is contained in the elliptic curve  $\mathcal{E}(\C)$, and those corresponding to  $t_1=0$ and $t_2=1$ are contained in the exceptional divisor or the inverse image $x=2z$.   Thus we have shown:
       \begin{lem} The chain $\widetilde{p}(\gamma) $   defines a relative homology class 
       $$ [ \widetilde{p}(\gamma)] \quad \in \quad M_B^{\vee} \ .$$
       \end{lem}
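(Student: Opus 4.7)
The plan is to verify that $\widetilde{p}(\gamma)$ represents a class in the singular relative homology $H_2\bigl(P\setminus\widetilde{A},\, \widetilde{B}\setminus(\widetilde{B}\cap\widetilde{A});\Q\bigr)$, which by duality (universal coefficients and the comparison of Betti cohomology with its dual) agrees with $M_B^{\vee}$. Concretely I need to establish three things: that $\widetilde{p}(\gamma)$ is a well-defined continuous map from the closed standard $2$-simplex $\overline{\Delta} = \{0\le t_1\le t_2\le 1\}$ into $P(\C)$; that its image on the open simplex lies in the complement of $\widetilde{A}(\C)$; and that its boundary is supported in $\widetilde{B}(\C)\setminus(\widetilde{B}\cap\widetilde{A})(\C)$.

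For continuity, first note that on the open simplex the formula $(t_1,t_2)\mapsto(\gamma(t_1):\sqrt{\gamma(t_2)^3+1}:1)$ lands in $\Pro^2(\C)$ because the analytic continuation of $\sqrt{x^3+1}$ along $\gamma$ is well defined (the path $\gamma$ avoids the branch points $-1,\rho,\overline{\rho}$). Continuity extends to the three closed faces by explicit computation: on the diagonal $t_1=t_2$ one gets a point of $\mathcal{E}(\C)$; on $t_1=0$ one evaluates at $\gamma(0)\in\{2,\infty\}$; on $t_2=1$ one has $\gamma(1)=\infty$. The only difficulty is that $\Pro^2$ itself does not receive continuous values at the corners $t_1=t_2=1$ or $(t_1,t_2)=(0,1)$ when $\gamma(0)=\infty$, because the limiting point $\infty=(0:1:0)$ is reached along varying directions. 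This is precisely why one works in the blow-up $P$: lifting via the proper map $\pi$, the direction of approach is recorded on the exceptional fibre, so $\widetilde{p}(\gamma)=\overline{\pi^{-1}(p(\gamma))}$ is continuous into $P(\C)$. I would verify this blow-up continuity by working in affine charts on $P$ and checking that $\sqrt{\gamma(t_2)^3+1}/\gamma(t_2)$ (the slope coordinate) has a well defined limit on each face.

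Next I establish avoidance of $\widetilde{A}$ on the open simplex. The components of $A$ are $x=\rho z$, $x=\overline{\rho}z$, $y=z$, $y=-z$. The first two require $\gamma(t_1)\in\{\rho,\overline{\rho}\}$, excluded by hypothesis on $\gamma$. The latter two require $\sqrt{\gamma(t_2)^3+1}=\pm 1$, equivalently $\gamma(t_2)^3=0$, and since $\gamma$ avoids $0$, this cannot occur. Because $\widetilde{A}$ is the strict transform, points of $\widetilde{A}$ lying over $\infty$ consist only of the four crossings \eqref{4points} together with $(1:0:0)$, and no interior point of $\widetilde{p}(\gamma)$ lies over these. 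For the three boundary faces I then check containment in $\widetilde{B}$ and non-intersection with $\widetilde{A}$: the diagonal face lies in the strict transform of $\mathcal{E}$; the face $t_1=0$ either lies in the strict transform of $\{x=2z\}$ (if $\gamma(0)=2$) or is a single point on the exceptional divisor (if $\gamma(0)=\infty$), both components of $\widetilde{B}$; the face $t_2=1$ lies on the exceptional fibre over $\infty$, which belongs to the total transform $\widetilde{B}$ by definition.

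The main obstacle, and the part requiring care, is checking that the boundary not only lands in $\widetilde{B}$ but also avoids $\widetilde{A}\cap\widetilde{B}$, particularly at the corners of $\overline{\Delta}$ where the chain approaches the exceptional divisor. Over $\infty$, the intersection of $\widetilde{A}$ with the exceptional divisor is the specific point $(1:0:0)$ together with the slope directions corresponding to $x=\rho z$ and $x=\overline{\rho}z$. On the face $t_2=1$ the limiting slope is determined by $\sqrt{\gamma(t_2)^3+1}/\gamma(t_2)\to\pm 1$ (since $\sqrt{x^3+1}/x\to\pm 1$ as $x\to\infty$), so the exceptional point reached has $y$-slope $\pm 1$, distinct from the bad slopes $\rho,\overline{\rho}$ and from $(1:0:0)$. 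An analogous check on the face $t_1=0$ in the case $\gamma(0)=\infty$ completes the argument, after which $\widetilde{p}(\gamma)$ is a bona fide relative $2$-chain and the class $[\widetilde{p}(\gamma)]\in M_B^{\vee}$ is well defined.
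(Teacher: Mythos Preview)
Your approach is exactly the paper's: the lemma is stated in the paper immediately after the construction with the words ``Thus we have shown'', and the proof amounts to the two sentences preceding it, namely that avoidance of $0,-1,\rho,\overline{\rho}$ along $\gamma$ keeps $\widetilde{p}(\gamma)$ off $\widetilde{A}$, and that the three boundary faces $t_1=t_2$, $t_1=0$, $t_2=1$ land respectively in $\mathcal{E}$, in $\{x=2z\}$ or the exceptional divisor, and in the exceptional divisor. Your write-up is considerably more detailed than the paper's, which is fine.

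However, a couple of your explicit computations on the exceptional fibre are off. First, the points of $\widetilde{A}$ lying over $\infty=(0:1:0)$ are \emph{not} the five points you list (those are the double points of $\widetilde{A}_k$ away from $\infty$); rather, they are the two points on the exceptional $\Pro^1$ with slope $x/z=\rho$ and $x/z=\overline{\rho}$, coming from the strict transforms of the two lines $x=\rho z$, $x=\overline{\rho}z$. Second, on the face $t_2=1$ the relevant exceptional coordinate is $x/z=\gamma(t_1)$ (working in the chart $y=1$, the point is $(x,z)=(\gamma(t_1)/\sqrt{\gamma(t_2)^3+1},\,1/\sqrt{\gamma(t_2)^3+1})$), not the ratio $\sqrt{\gamma(t_2)^3+1}/\gamma(t_2)$, which in any case tends to $\infty$ rather than $\pm 1$. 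The correct conclusion---that this face avoids $\widetilde{A}$---then follows because $\gamma(t_1)\notin\{\rho,\overline{\rho}\}$ by hypothesis. With these corrections your argument goes through and matches the paper's.
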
 
       Consider the following examples:

       \begin{enumerate} 
       \item 
The straight-line path $\gamma_{2,\infty}$ from $2$ to $\infty$ which is contained in the real axis, together with the positive root of $x^3+1$. The  chain
$ p(\gamma_{2,\infty})$ 
is
  $$ \{ (x:y:1):  2 < x  \ , \  3< y  \  , \   x^3+1 <y^2 \} \quad   \subset \quad  \Pro^2(\R)\ .$$
 The closure of its inverse image in $P(\R)$ defines a relative homology cycle whose class $[ \widetilde{p}(\gamma_{2,\infty})]\in M_B^{\vee} $  
  which is invariant under $F_{\infty}$.

\vspace{0.1in}
\item
Let $\gamma_{\infty,-1}$ denote a path (together with the positive square root of $x^3+1$ initially) which travels along the real axis  from $\infty$  to a point close to $0$ around which it traverses  in a small semi-circle, before continuing on to  a  point near $-1$ along the real axis. 
  After winding around $-1$, it  returns back towards infinity, this time passing around $0$ on the opposite side. The sign of $\sqrt{x^3+1}$ is  negative on the return path.   Let $\overline{\gamma}_{\infty,-1}$ denote the complex conjugate path (but equipped with the same, initially positive, determination of the square root of $x^3+1$). 
The linear combination 
 $$\frac{1}{2} \left( [\widetilde{p}(\gamma_{\infty, -1})]  + [\widetilde{p}(\overline{\gamma}_{\infty,-1})] \right)   $$
 is invariant under $F_{\infty}$. It is not zero because we are working with paths in the elliptic curve, or, `loaded' paths on the punctured sphere with coordinate $x$. 
      \end{enumerate}
  There are many  other paths which one might consider, including paths from $x=2$ to $\infty$ which wind around the singularities $0$, or $\rho$ and $\overline{\rho}$.

\begin{figure}[h!]
{\includegraphics[height=3cm]{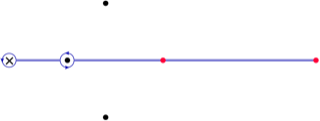}} 
\put(-5,30){$\infty$}
\put(-175,30){$0$}
\put(-226,30){$-1$}
\put(-145,80){$\rho$}
\put(-145,0){$\overline{\rho}$}
\put(-112,30){$2$}
\put(-60,48){$\gamma_{\infty,-1}$}
\caption{The path $\gamma_{\infty,-1}$  in $\C \backslash \{0,-1, \rho, \overline{\rho}\}$ relative to the two points $\{2, \infty\}$ (red).  The  punctured elliptic curve is a double cover, ramified at the additional point $-1$.}
\end{figure}
\noindent

     \subsection{The elliptic extension}  \label{sec: EllExt} 
The class $[\sigma_{\mathcal{E}}]$ can be represented as follows. Start with the real locus $\mathcal{E}(\R)$ (oriented in the positive $y$ direction)  and deform it by small semi-circles around the points $(0:\pm 1: 1)$ so that it avoids $A \cap \mathcal{E}$ as shown in figure 4; the upper line depicts its image under $\pi_y$.   The resulting chain  $c$ is not invariant under complex conjugation, but  $\frac{1}{2} (c+ \overline{c})$ 
is  a representative  for a  Frobenius invariant path in $H_1(\mathcal{E} \backslash (\mathcal{E} \cap A) (\C) )$.  
It can be viewed as  the path  given by the real locus $\mathcal{E}(\R)$ away from $(0: \pm 1:1)$ which  bifurcates into two `half-paths' near each point $(0:\pm 1:1)$ - each half-path traces a semi-circle on either side of the puncture which meet on the other side. We can view $\mathcal{E}(\C) \subset P(\C)$. 
By lemma \ref{lem: prelim},  the images of $c$ and $\overline{c}$  vanish in  $H_1(P \backslash \widetilde{A})$ (for instance, the integrals of $\pi^*\omega_y$, $\pi^*\omega_x$ vanish along them) and so 
 there exists a singular two-chain $\sigma_{\mathcal{E}} $ in $ (P \backslash \widetilde{A})(\C)$ such that  $\partial \sigma_{\mathcal{E}}=\frac{1}{2}( c+\overline{c})$.  Since the   boundary of $\sigma_{\varepsilon}$ is contained in the divisor $B(\C)$,  it defines a relative homology class  in $M_B^{\vee}$.

\begin{figure}[h!]
{\includegraphics[height=0.88cm]{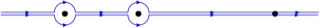}} 
\put(-50,-10){$y=\infty$}
\put(-170,-10){$y=1$}
\put(-240,-10){$y=-1$}
\caption{A singular chain  in $\A^1\backslash \{\pm 1\}$ given by the sum of the classes of the two paths shown. 
It is the image of a Frobenius-invariant path on $\mathcal{E}(\C)$ under the projection $\pi_y: \mathcal{E}\rightarrow \Pro^1$. 
}
\end{figure}

The integral of $\pi^*(\omega)$ \eqref{omegadef} along  $\sigma_{\E}$ can be computed as follows. Consider the  primitive 
$$ F =   -\omega_x \,  \log \left(\frac{y+z}{y-z } \right)  $$
of $\omega$. It satisfies $dF = \omega$. By Stokes' formula, and the fact that $\partial \sigma_{\varepsilon}=\frac{1}{2}(c+\overline{c})$, 
$$\int_{\sigma_{\E}} \pi^* (\omega) = \int_{\frac{1}{2}(c+\overline{c})} F\Big|_{\mathcal{E}} =  \mathrm{Re}\, \int_{c} F\Big|_{\mathcal{E}}  \ .$$
The last part follows from the fact that the chain of integration is Frobenius-invariant and hence  the integral is real. 
The integrand, in the coordinate $y$,  is
  $$  F\Big|_{\mathcal{E}}=- \frac{2}{3} \left(\frac{x+1}{x^2}\right) \log \left( \frac{y+1}{y-1}\right) \frac{dy}{y}$$
  where $x = \sqrt[3]{y^2-1}$ is the branch given by the real root for $y$ large on the real axis.

      \subsubsection{Reformulation} 
  Since the real part of the integrand is anti-invariant under the involution $y\mapsto -y$, it suffices to integrate along the segment of $c$ from $y=0$ to infinity. The point $y=0$ corresponds to the point $x=-1$, which does not play any role in the definition of the motive $M$, but this does not matter. Writing the previous integral  using the $x$ coordinate gives
  \begin{eqnarray}  \label{Iinpieces}    
  \int_{\sigma_{\E}} \pi^* (\omega)  &=  & -2\,  \mathrm{Re}\, \int^{\infty}_{-1}   \frac{dx}{x^2-x+1}  \log \left( \frac{ \sqrt{x^3+1}+1}{  \sqrt{x^3+1}-1  }  \right)  \nonumber \\
    & =  &  -2 \int_{-1}^0   \frac{dx}{x^2-x+1}  \log \left( \frac{ 1+ \sqrt{x^3+1}}{ 1- \sqrt{x^3+1}  }  \right)  + \\
 &  & \quad-   2  \int_{0}^{\infty}  \frac{dx}{x^2-x+1}  \log \left( \frac{ \sqrt{x^3+1}+1}{  \sqrt{x^3+1} -1 }  \right)    \nonumber  \\
 & < & 0    \ . \nonumber 
 \end{eqnarray}
Each integral converges (since it is an integral on a compact domain with boundary, and has at worst logarithmic singularities on the boundary: see \cite{Brown:2009qja}, \S4.4) and is negative. 
The image of the  path $c$ under the projection $\pi_x$ is equivalent to  the path $\overline{\gamma}_{\infty,-1}$ depicted in  figure $3$.

\begin{rem}  \label{remMMVinterp} Each integral in \eqref{Iinpieces}   can be interpreted as a multiple modular value \S\ref{sec:modular}, since they are  regularised double integrals of modular forms between cusps.
They can also be interpreted as  multiple elliptic polylogarithms: i.e.,  an iterated integral of the  two logarithmic forms $\omega_x$, $\omega_y$  along a path in $(\mathcal{E}\backslash \Sigma)(\R)$ between tangential basepoints based at $\Sigma$.     \end{rem} 
   
   \section{Two motivic periods which are not polylogarithmic}
   \label{sec:two_periods}
   
     \subsection{A mixed-elliptic period} 
     \label{sec:mixed_elliptic_period}
     Consider the motivic period 
     $$I^{\mm}_{\mathcal{E}}:= [N, [\sigma_{\mathcal{E}}],  [\pi^*\omega]]^{\mm}   \ . $$
     It is equivalent, via the morphism $r: N\rightarrow N_1$ (see \eqref{N1ext}) to the  motivic period
     $$  [ N_1, [\sigma_{\mathcal{E}}], r_{\mathrm{dR}}\, [\pi^*\omega]]^{\mm}$$ 
      of $N_1$, since 
$[\sigma_{\mathcal{E}}]$ is by construction in the image of $r_B^{\vee}: (N^{\vee}_1)_B \rightarrow N^{\vee}_B$ (denoted with the same symbol).  Its period is therefore an $F_{\infty}$-invariant period of the simple extension $N_1$. We have just shown  that it is negative, and hence  non-zero:
$$\mathrm{per} \left( I^{\mm}_{\mathcal{E}}\right)= I_{\mathcal{E}} <0 \ .$$  
 We now show that this  is precisely the obstruction to being a polylogarithmic motivic period.
 
 \subsubsection{Non-triviality of the extension}
\begin{lem} The extension $N_1$ does not split. 
\end{lem}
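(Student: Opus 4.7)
The plan is to derive a contradiction from a simple Hodge-theoretic obstruction: if $N_1$ splits in $\mathcal{H}$, then the period $I_{\mathcal{E}}$ must vanish, contradicting the strict negativity established in \eqref{Iinpieces}.

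First I would observe that $\omega = \omega_x \wedge \omega_y$ is a holomorphic (logarithmic) $2$-form, so its class $[\pi^*\omega] \in M_{dR}$ lies in $F^2 M_{dR}$. Since the morphism $r\colon N \to N_1$ in $\mathcal{H}$ respects the Hodge filtration, it follows that $r_{dR}[\pi^*\omega] \in F^2 (N_1)_{dR}$.

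Next I would analyse $F^2$ on the two graded pieces of $N_1$. Because $H^1(\mathcal{E})$ is pure of weight one with Hodge types $(1,0)$ and $(0,1)$, we have $F^2 H^1(\mathcal{E})_{dR} = 0$; and because $\Q_\chi(-2)$ is pure of Hodge type $(2,2)$, we have $F^2 \Q_\chi(-2)_{dR} = \Q_\chi(-2)_{dR}$. If $N_1$ split in $\mathcal{H}$ as $H^1(\mathcal{E}) \oplus \Q_\chi(-2)$, then the Hodge filtration, Betti structure, and comparison isomorphism would decompose compatibly, forcing $F^2 (N_1)_{dR}$ to lie entirely in the $\Q_\chi(-2)$-summand. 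In particular $r_{dR}[\pi^*\omega]$ would be a pure $\Q_\chi(-2)$ class.

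On the Betti side, the real Frobenius $F_\infty$ acts by $-1$ on $\Q_\chi$ and by $(-1)^2 = 1$ on $\Q(-2)_B$, so $\Q_\chi(-2)_B^+ = 0$. Hence in the split case, the whole Frobenius-invariant Betti dual $(N_1)_B^{+,\vee}$ comes from $H^1(\mathcal{E})_B^{+,\vee}$, which is where $[\sigma_{\mathcal{E}}]$ lives by construction (\S\ref{sec: EllExt}). Because the period pairing respects the direct sum in a split object of $\mathcal{H}$, one would conclude
$$I_{\mathcal{E}} = \langle [\sigma_{\mathcal{E}}],\, r_{dR}[\pi^*\omega]\rangle = 0,$$
contradicting the strict inequality $I_{\mathcal{E}} < 0$ obtained in \eqref{Iinpieces}.

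The crux of the argument is really the Hodge-type mismatch: a holomorphic $2$-form class in $F^2$ cannot be absorbed by a pure Hodge structure of weight one, so any putative splitting would relegate $r_{dR}[\pi^*\omega]$ to the $\Q_\chi(-2)$-summand, where it pairs trivially with an $F_\infty$-invariant Betti class. The only mild subtlety is to justify that a splitting in $\mathcal{H}$ induces compatible splittings of Hodge filtration, Betti structure, and comparison isomorphism — but this is automatic from the definition of morphisms in $\mathcal{H}$.
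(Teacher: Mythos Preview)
Your proof is correct and follows essentially the same approach as the paper. Both arguments combine the two key facts that $F^2 H^1_{dR}(\mathcal{E}) = 0$ and $\Q_\chi(-2)_B^+ = 0$ to force the period $I_{\mathcal{E}}$ to vanish in the split case, contradicting the strict negativity from \eqref{Iinpieces}; the only cosmetic difference is that the paper phrases the vanishing in terms of motivic periods (decomposing $I^{\mm}_{\mathcal{E}} = I_1^{\mm} + I_2^{\mm}$ and killing each piece in turn), whereas you argue directly via orthogonality of the Betti and de Rham components in the putative direct sum.
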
 
\begin{proof} If one had $N_1 \cong H^1(\mathcal{E}) \oplus \Q_{\chi}(-2)$, i.e., $N_1$ were to split in $\mathcal{H}$, then  $I^{\mm}_{\mathcal{E}}$ would be a  sum  $I_1^{\mm} + I_2^{\mm}$  where $I_1^{\mm}$, $I^{\mm}_2$ are  motivic periods of $H^1(\mathcal{E})$, $\Q_{\chi}(-2)$, respectively. Because $I^{\mm}_{\mathcal{E}}$ is real ($F_{\infty}$-invariant), we can assume that the same is true of both $I_1^{\mm}, I_2^{\mm}$.   But since $F_{\infty}$ acts via $-1$ on $\Q_{\chi}(-2)_B$, the object $\Q_{\chi}(-2)$ has no non-trivial real periods and so $I_2^{\mm}=0$. Therefore  $I^{\mm} = I_1^{\mm}$ is a motivic period of $H^1(\mathcal{E})$. Furthermore,  it has Hodge filtration $F^2$,  but since $F^2 H_{dR}^1(\mathcal{E})  =0$, we must have $I^{\mm}=0$. Therefore  if $N_1$ were to split then $I^{\mm}_{\mathcal{E}}$ would vanish 
 and so would its period 
 $I_{\mathcal{E}} = \mathrm{per} \left( I_{\mathcal{E}}^{\mm}\right)$, a contradiction. 
\end{proof} 

\begin{cor}\label{cor:main_1} The motivic period $I^{\mm}_{\mathcal{E}}$ is algebraically independent over the  motivic periods of  mixed Artin-Tate objects in $\mathcal{H}$.  In particular, it is not  equivalent to a polylogarithmic motivic period. 
\end{cor}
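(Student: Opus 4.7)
The plan is to promote the non-triviality of the extension $N_1$ (just established in the preceding lemma) to transcendence of $I^{\mm}_{\mathcal{E}}$ over the sub-algebra $\mathcal{P}^{\mathrm{AT}} \subset \mathcal{P}_{\mathcal{H}}$ of motivic periods of mixed Artin-Tate objects of $\mathcal{H}$, using the Tannakian formalism. The key strategy is to construct an element $u$ in the motivic Galois group of $\langle N_1\rangle \subset \mathcal{H}$ which acts as the identity on $\mathcal{P}^{\mathrm{AT}}$ but moves $I^{\mm}_{\mathcal{E}}$, and then deduce non-algebraicity.

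First I would observe that $H^1(\mathcal{E})$ admits no non-zero morphism to or from any object of the Artin-Tate sub-category $\mathcal{H}^{\mathrm{AT}}$: the Hodge decomposition of $H^1(\mathcal{E})$ is of type $(1,0) + (0,1)$, whereas mixed Artin-Tate objects have purely $(p,p)$-type graded pieces. Consequently the projection of motivic Galois groups $G_{\langle N_1\rangle} \twoheadrightarrow G^{\mathrm{AT}}_{\langle N_1\rangle}$ has a non-trivial unipotent kernel, naturally parametrised by $\mathrm{Ext}^1_{\mathcal{H}}\!\left(\Q_{\chi}(-2),\, H^1(\mathcal{E})\right)$. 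The class of $N_1$ in this $\mathrm{Ext}$-group is non-zero by the preceding lemma, and yields a non-trivial element $u$ of this kernel. By construction, $u$ acts trivially on every motivic period of a mixed Artin-Tate object.

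Next, unpacking the motivic Galois coaction on the matrix coefficient $[N_1, [\sigma_{\mathcal{E}}], [\pi^*\omega]]^{\mm}$ in a basis of $(N_1)_{dR}$ adapted to the weight filtration shows that $u$ acts on $I^{\mm}_{\mathcal{E}}$ by translation, $u \cdot I^{\mm}_{\mathcal{E}} = I^{\mm}_{\mathcal{E}} + \lambda$, where $\lambda$ is a non-zero motivic period of $H^1(\mathcal{E})$; the non-vanishing of $\lambda$ is precisely the non-splitting of $N_1$. If $I^{\mm}_{\mathcal{E}}$ satisfied a polynomial relation $P(I^{\mm}_{\mathcal{E}}) = 0$ with $P \in \mathcal{P}^{\mathrm{AT}}[X]$ of minimal positive degree, then applying $u^n$ (which fixes all coefficients of $P$) would yield $P(I^{\mm}_{\mathcal{E}} + n\lambda) = 0$ for every $n \geq 1$, giving $P$ infinitely many distinct roots and hence forcing $P = 0$.

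The main obstacle is making rigorous the identification $u \cdot I^{\mm}_{\mathcal{E}} = I^{\mm}_{\mathcal{E}} + \lambda$ with $\lambda \neq 0$ when regarded inside $\mathcal{P}_{\mathcal{H}}/\mathcal{P}^{\mathrm{AT}}$. This amounts to an explicit computation of the coaction for the extension $0 \to H^1(\mathcal{E}) \to N_1 \to \Q_{\chi}(-2) \to 0$, checking that the off-diagonal term picks up the extension class, together with the observation that motivic periods of $H^1(\mathcal{E})$ are themselves algebraically independent from $\mathcal{P}^{\mathrm{AT}}$ — which follows from the Hodge-type incompatibility noted above and the standard Tannakian freeness of periods of a simple non-Artin-Tate object. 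The final statement about polylogarithmic periods then follows, since motivic polylogarithms at algebraic arguments are motivic periods of mixed Tate objects, a fortiori mixed Artin-Tate.
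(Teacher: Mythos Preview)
Your approach is essentially the same as the paper's: both exploit the non-splitting of $N_1$ to produce a Galois element that shifts $I^{\mm}_{\mathcal{E}}$ by a non-zero motivic period of $H^1(\mathcal{E})$, and both conclude by invoking the algebraic independence of such a period over $\Pe^{\mm}_{AT}$ via the Hodge-type mismatch $(1,0)+(0,1)$ versus $(p,p)$. The only difference lies in the final contradiction: you iterate your element $u$ to manufacture infinitely many roots of $P$, whereas the paper applies a single $g$ (merely stabilising, not fixing, $\Pe^{\mm}_{AT}$) and then uses transitivity of algebraic extensions.

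There is one point in your version that needs more care. You require $u$ simultaneously to (a) fix every element of $\Pe^{\mm}_{AT}$ and (b) act on $I^{\mm}_{\mathcal{E}}$ as a pure translation, so that $u^n \cdot I^{\mm}_{\mathcal{E}} = I^{\mm}_{\mathcal{E}} + n\lambda$. Condition (b) forces $u$ to lie in the unipotent radical of $G^{dR}_{\mathcal{H}}$ (otherwise it could act non-trivially on $H^1(\mathcal{E})_{dR}$ and the iterates would not be arithmetic progressions), but elements of the unipotent radical do not in general fix $\Pe^{\mm}_{AT}$ --- they move $\zetam(3)$, for instance. So you actually need $u$ in the intersection of the unipotent radical with $\ker(G^{dR}_{\mathcal{H}} \to G^{dR}_{AT})$, and must check that this intersection still detects the extension $N_1$. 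This is true (the relevant piece of $\mathfrak{u}^{\mathrm{ab}}$ corresponds to $\mathrm{Ext}^1(\Q_{\chi}(-2), H^1(\mathcal{E}))$, which dies in the Artin-Tate quotient), but it is not as immediate as your sketch suggests. Note also that working inside $G_{\langle N_1\rangle}$ is insufficient: the coefficients of $P$ may be Artin-Tate periods not arising from objects of $\langle N_1\rangle$, so $u$ must be constructed in the full $G^{dR}_{\mathcal{H}}$. The paper's route avoids all of this by only asking that $g$ \emph{stabilise} $\Pe^{\mm}_{AT}$, which is automatic for any Galois element, at the modest cost of the tower-of-extensions step.
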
 
\begin{proof} Since the extension $N_1$ is non-split, the unipotent radical of the de Rham Galois group $G^{dR}_{\mathcal{H}}=\mathrm{Aut}_{\mathcal{H}}^{\otimes}{\omega_{dR}}$, where $\omega_{dR}$ is the de Rham fiber functor,  acts non-trivially on its de Rham realisation and also on the motivic period $I^{\mm}_{\mathcal{E}}$.  It therefore admits a Galois conjugate $\xi = (g-\id)  I^{\mm}_{\mathcal{E}}$  for some $g\in G^{dR}_{\mathcal{H}}(\overline{\Q})$, where $\xi$ is a  non-zero motivic period  of $H^1(\mathcal{E})$.  We may write $\xi = \alpha \,\omega_+ + \beta\, \eta_+$ where $\alpha,\beta \in \Q$ are not both zero and $\omega_+, \eta_+$ are real (i.e., $F_{\infty}$-invariant) motivic periods of  $H^1(\mathcal{E})$  of Hodge types $(1,0)$ and $(0,1)$  respectively (see  \cite{brownnotesmot} for definitions). Since Artin-Tate objects in $\mathcal{H}$ are all of Hodge type $(p,p)$, the element $\xi$ is algebraically independent over the ring $\Pe^{\mm}_{AT}$ generated by  their motivic periods.  Suppose by contradiction that $I^{\mm}_{\mathcal{E}}$ is algebraic over $\Pe^{\mm}_{AT}$, and thus satisfies an equation $P(I^{\mm}_{\mathcal{E}})=0$ where $P =a_n x^n +\ldots +a_0  \in \Pe^{\mm}_{AT}[x] $ is a polynomial  with coefficients in $\Pe^{\mm}_{AT}$ and $a_n \neq 0 $. Since the ring $\Pe^{\mm}_{AT}$ is stable under 
 $G^{dR}_{\mathcal{H}}$, we can apply $g$ to $P(I_{\mathcal{E}}^{\mm})$ to deduce a non-trivial polynomial equation for $\xi$ of the form $g(a_n) \xi^n + \ldots =0 $ whose coefficients are in   $\Pe^{\mm}_{AT}[I^{\mm}_{\mathcal{E}}]$,  since  $g(a_n)\neq 0$. Since a composition of algebraic extensions is algebraic, it follows that $\xi$ is algebraic over $\Pe^{\mm}_{AT}$, a contradiction.
\end{proof} 
In fact, the period $I_{\mathcal{E}}$ is  proportional to the regulator of the extension.  By Beilinson's conjecture it is predicted to be a special value of the $L$-function of the elliptic curve at $2$ and indeed we find numerically to many digits that 
\begin{equation} 
\label{IellipticNumerically}  I_{\mathcal{E}}  \overset{\cdot}{=} -4 \, \pi \sqrt{3}\,  \Lambda(f,2)\ .   
\end{equation}
In \S\ref{sec:modular} we discuss a way of computing the left-hand side to high precision.

 \subsection{A mixed Artin-Tate-elliptic period}
 \label{sec:I_pol}
 Let us now consider  the locus 
  $$\sigma = \{ (x:y:1):  x> 2 , 3<y  ,   x^3+1 <y^2 \}  \ \subset \  \Pro^2(\R)$$
  which is the chain $p(\gamma_{2,\infty})$ considered earlier. The closure $\widetilde{p}(\gamma_{2, \infty})$ of its pull-back  to $P(\C)$  defines a relative homology cycle whose class 
  $ [\widetilde{p}(\gamma_{2, \infty})] \in M_B^{\vee}$ is invariant under $F_{\infty}$.  We shall denote it  simply by $[\sigma_M]$. 
Consider the  motivic period 
  $$I^{\mm} = [ M, [\sigma_M] , [\pi^*\omega]]^{\mm}\ .$$
  Its period is given by the following integral along the path $\gamma_{2, \infty}$:
  \beq\label{eq:I_def_2}
  \per\, \left( I^{\mm} \right)=   \frac{1}{\rho- \overline{\rho}}  \int_{2\leq x_1 \leq x_2 \leq \infty}   
  d\log \left(\frac{x_1-\rho }{x_1- \overline{\rho}
  } \right)\wedge d \log \left(\frac{\sqrt{x_2^3+1}+1}{\sqrt{x_2^3+1}-1}\right)
  \eeq
   where the square roots are positive.     Since the class of $[\pi^*\omega]$ spans the copy of $\Q_{\chi}(-2)_{dR}$ in $\gr^W_4M_{dR}$, the natural map $N\rightarrow M$ (see \eqref{Ndefn}) defines an equivalence of motivic periods 
    $$I^{\mm} = [ N, [\sigma_M] , [\pi^*\omega]]^{\mm}\ ,$$
    where, by abuse of notation, $[\sigma_M] $ also denotes its image in $M_B^{\vee} \rightarrow N_B^{\vee}.$ By \eqref{NBplussplit}, 
 there exists a rational number $\lambda_{\mathcal{E}} \in \Q$ such that
 $$[\sigma_M] = \lambda_{\mathcal{E}} [\sigma_{\mathcal{E}}] +  [\sigma_{T}] \  ,$$
 where $[\sigma_{\mathcal{E}}]$ is the elliptic class considered earlier, and $[\sigma_{T}] \in T^{\vee}_B$ is some relative homology class in the Artin-Tate object $T$.
 It follows that $I^{\mm}$ is a sum:
 \beq\label{eq:relation}
I^{\mm} = \lambda_{\mathcal{E}}  I^{\mm}_{\mathcal{E}} +   I^{\mm}_{\mathrm{Pol}} \,,
\eeq
where
$$ I^{\mm}_{\mathrm{Pol}}   =   [ T, [\sigma_{T}] , [\pi^*\omega]]^{\mm}  \nonumber $$
is a period of an Artin-Tate motive $T$.
One can presumably show that $I^{\mm}_{\mathrm{Pol}}$ is a linear combination of  motivic dilogarithms and logarithms, as the notation suggests, although we have not done this.

 It remains  to compute the coefficient $\lambda_{\mathcal{E}}$. The boundary  component of  $[\sigma_M]$  which lies in $\mathcal{E}(\C) \backslash \Sigma$ (i.e., its image under the dual  of the Betti component of the face map \eqref{facemapi} for $i=1$) is the path $\alpha$ from $(2,3)$ to $\infty$ in $\mathcal{E}(\C)$. We can check (by using the relations obtained by intersecting $\mathcal{E}$ with the lines $x=2z$, $x+z=y$ and $y=z-2x$) that the orbit of the point at infinity $\infty$ under  multiplication by $(2,-3)$ in the group law of the elliptic curve is: 
 $$ \infty  \ \mapsto  \ (2,-3) \  \mapsto \  (0,-1)  \ \mapsto  \ (-1,0) \  \mapsto \  (0,1)  \ \mapsto  \ (2,3) \ \mapsto \ \infty \ .$$
 It follows that  $6 \alpha $ is homotopic to the  path from $y=-\infty$ to $y=\infty$, which is  the  Frobenius-invariant homology generator on $\mathcal{E}(\C)$ considered in \S\ref{sec: EllExt}. Therefore, 
 $$\lambda_{\mathcal{E}} = \frac{1}{6} \ . $$ 
In particular, $\lambda_{\mathcal{E}}$ is non-zero, and since $I^{\mm}_{\mathrm{Pol}}$ is Artin-Tate,  we deduce the
\begin{cor}\label{cor:main_2} The motivic period $I^{\mm}$ is algebraically independent over the space of mixed Artin-Tate motivic periods. In particular, it is algebraically independent from motivic polylogarithms at algebraic points. 
\end{cor}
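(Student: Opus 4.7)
The plan is to reduce Corollary~\ref{cor:main_2} directly to Corollary~\ref{cor:main_1} via the decomposition \eqref{eq:relation}. Everything needed is already in place: the coefficient $\lambda_{\mathcal{E}} = 1/6$ has been computed, and $I^{\mm}_{\mathrm{Pol}}$ is by construction a motivic period of the Artin–Tate object $T$, hence belongs to the ring $\Pe^{\mm}_{AT}$ of mixed Artin–Tate motivic periods.

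The argument is a one-line contrapositive. Suppose, for contradiction, that $I^{\mm}$ is algebraic over $\Pe^{\mm}_{AT}$. Since $I^{\mm}_{\mathrm{Pol}} \in \Pe^{\mm}_{AT}$ and $\Pe^{\mm}_{AT}$ is a ring stable under the relations we consider, the element
\[
I^{\mm}_{\mathcal{E}} = 6\bigl( I^{\mm} - I^{\mm}_{\mathrm{Pol}} \bigr)
\]
would then itself be algebraic over $\Pe^{\mm}_{AT}$, because algebraic elements form a subring containing $\Pe^{\mm}_{AT}$. This contradicts Corollary~\ref{cor:main_1}, which asserts precisely that $I^{\mm}_{\mathcal{E}}$ is algebraically independent over $\Pe^{\mm}_{AT}$. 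The second assertion, that $I^{\mm}$ is algebraically independent from motivic polylogarithms evaluated at algebraic points, follows since such polylogarithms all lie in $\Pe^{\mm}_{AT}$ (motivic polylogarithms at algebraic arguments are periods of mixed Tate objects, and Artin twists by the Dirichlet motive $\Q_\chi$ are already absorbed in $\Pe^{\mm}_{AT}$).

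There is no real obstacle here once Corollary~\ref{cor:main_1} and the computation of $\lambda_{\mathcal{E}}$ are in hand; the only subtlety to flag in the write-up is the non-vanishing of $\lambda_{\mathcal{E}}$, which is guaranteed by the torsion computation showing that $6\alpha$ is homotopic to the Frobenius-invariant generator of $H_1(\mathcal{E}(\C))^+$. If one wished to be fully pedantic, one could replay the Galois-theoretic argument from the proof of Corollary~\ref{cor:main_1}: apply an element $g \in G^{dR}_{\mathcal{H}}(\overline{\Q})$ in the unipotent radical to a putative algebraic relation $P(I^{\mm}) = 0$ with coefficients in $\Pe^{\mm}_{AT}$, obtaining a nontrivial algebraic relation for a Galois conjugate of $\tfrac{1}{6}I^{\mm}_{\mathcal{E}}$ of Hodge type contained in $H^1(\mathcal{E})$, and conclude by the same transitivity of algebraicity argument as before. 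Either presentation suffices, and given that Corollary~\ref{cor:main_1} has already done the heavy lifting, the shorter subtraction argument seems preferable.
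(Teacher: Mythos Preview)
Your proposal is correct and matches the paper's own argument exactly: the paper deduces the corollary in a single sentence from the non-vanishing of $\lambda_{\mathcal{E}}=\tfrac{1}{6}$ and the fact that $I^{\mm}_{\mathrm{Pol}}$ is a period of the Artin--Tate object $T$, which is precisely your subtraction argument reducing to Corollary~\ref{cor:main_1}. The optional Galois-theoretic replay you mention is not needed, as you rightly note.
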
 
The period conjecture, in the weak version stated in \cite{brownnotesmot}, Conjecture 1, implies that the period homomorphism from the ring of motivic periods of $\mathcal{H}$  to $\C$ is injective. If true, as expected, then it implies that the integrals $I$ and $I_{\mathcal{E}}$ are algebraically independent from values of polylogarithms at algebraic arguments.

\begin{rem}
  The above discussion involved no numerical or analytic calculations, only the negativity of the integral $I_{\mathcal{E}}$ to exhibit a non-trivial extension class. In general, the underlying geometry, via the theory of motivic periods,  enables one in principle to predict completely  the types of numbers one expects to obtain. 
\end{rem}

\section{Double Eisenstein integrals and $L$-values of cusp forms}
\label{sec:modular}

\subsection{Eisenstein series on $Y(6)$}
 Every Eisenstein series of weight $n \geq 2$ for $\Gamma(6)$ is a linear combination of the following series~\cite{Broedel:2018iwv}:
\beq\label{eq:EisensteinH_def}
H^{(n)}_{r,s}(\tau) = \sum_{\substack{(\alpha,\beta)\in \mathbb{Z}^2\\ (\alpha,\beta)\neq(0,0)}}\frac{e^{i\pi(s\alpha-r\beta)/3}}{(\alpha+\beta\tau)^n}\,,\qquad 0\le r,s<6\,.
\eeq
This series is absolutely convergent, unless $n=2$, in which case the `Eisenstein summation' convention is understood. In Appendix~\ref{app:Eisenstein} we show how to express differentials with logarithmic singularities at the cusps in terms of these Eisenstein series and the cusp form $f$.

\subsection{Double Eisenstein integrals} We can use the modular parametrisation of $\E$ to write the iterated integral of~\eqref{eq:I_def_2} as an iterated integral on $Y(6)$. Changing variables from $x$ to $\tau$ using~\eqref{eq:x6y6_def} and using the relations in Appendix~\ref{app:Eisenstein}, we find
\beq\bsp\label{eq:dlog_to_eisenstein}
\varphi^*d\log&\left(\frac{x-\rho}{x-\bar\rho}\right) = \frac{d\tau}{2\pi i}\,E_1(\tau)\,,\\
\varphi^*d\log&\left(\frac{1+\sqrt{1+x^3}}{1-\sqrt{1+x^3}}\right) = \frac{d\tau}{2\pi i}\,E_2(\tau)\,,
\esp\eeq
where $E_1(\tau)$ and $E_2(\tau)$ are  the following linear combinations of Eisenstein series of weight two:
\beq\bsp
E_1(\tau)&\, = 2\,H_{1,2}^{(2)}(\tau )-H_{0,2}^{(2)}(\tau )-2\, H_{1,4}^{(2)}(\tau )-H_{2,0}^{(2)}(\tau )-2\, H_{2,2}^{(2)}(\tau )\,,\\
E_2(\tau)&\,  = 3\, H_{1,0}^{(2)}(\tau )-2\, H_{0,3}^{(2)}(\tau )+6\, H_{1,3}^{(2)}(\tau )-H_{3,0}^{(2)}(\tau )\,.
\esp\eeq
We can then recognise the integral in~\eqref{eq:I_def_2} as a double iterated integral of Eisenstein series:
\beq\label{eq:I_Eisenstein_2}
I=\frac{1}{\rho-\bar{\rho}}\int_{0\le t_1\le t_2\le\infty}\frac{dt_1\wedge dt_2}{(2\pi)^2}\,E_1(it_1)\,E_2(it_2)\,,
\eeq
In general, 
iterated integrals of Eisenstein series may diverge at the cusps. These divergences can be regularised by replacing the cusps by a suitable a tangential base point at a cusp.  See~\cite{Brown:mmv}  for a more detailed discussion.

In~\cite{Brown:mmv} it was shown that for small weights double Eisenstein integrals for the full modular group $\Gamma(1)$ can be evaluated in terms of multiple zeta values. The first obstruction to multiple zeta values appears in weight 12. The first cusp form for $\Gamma(1)$ also appears in weight 12, and it was shown in~\cite{Brown:mmv} using the Rankin-Selberg method that certain double Eisenstein integrals in weight twelve also evaluate to the first non-critical $L$-value of this cusp form.  

In the present setting we are dealing with Eisenstein series for the subgroup $\Gamma(6)$. One  expects that in low weights double Eisenstein integrals for $\Gamma(6)$ evaluate  to multiple polylogarithms evaluated at sixth roots of unity, as well as periods of simple extensions of motives of  cusp forms, which, by Beilinson's conjecture, should include the critical values of the associated $L$-functions.  Since $Y(6)$ has genus one, the first cusp form $f(\tau)$ for $\Gamma(6)$ appears in weight two. 
%

\subsection{Numerical evaluations} 
It is easy to evaluate the integral in~\eqref{eq:I_Eisenstein_2} numerically to several hundred digits. Using the PSLQ algorithm, we can find a linear combination of $\Lambda(f,2)$ and multiple polylogarithms evaluated at sixth roots of unity that agree with the numerical value of $I$ to (at least) 200 digits. We find:
\beq\label{eq:I_res}
I = -\frac{2 \, \pi}{ \sqrt{3}}\,  \Lambda(f,2) +\frac{5}{\sqrt{3}}\,\textrm{Cl}_2\left(\frac{\pi}{3}\right)\,,
\eeq
with $\textrm{Cl}_2\left(\frac{\pi}{3}\right) = \textrm{Im }\textrm{Li}_2(e^{i\pi/3})$.
We can use a similar approach to obtain an expression for the integral $I_{\E}$ from \S\ref{sec:mixed_elliptic_period} in terms of the same set of transcendental numbers. We find:
\beq\label{eq:I_E_res}
I_{\E} = -4 \, \pi \sqrt{3}\,  \Lambda(f,2)\, , 
\eeq
where
 \beq
\Lambda(f,2) = \int_{0}^{\infty}  f(it)t^2  \, \frac{dt}{t} = 0.85718907492991773071685111\ldots\,.
\eeq
was the completed $L$-value of $f$. 
Comparing~\eqref{eq:I_res} and~\eqref{eq:I_E_res} with ~\eqref{eq:relation} 
we find that 
\beq
I_{\mathrm{Pol}} = \frac{5}{\sqrt{3}}\,\textrm{Cl}_2\left(\frac{\pi}{3}\right)\,.
\eeq
While the results obtained here are based on high-precision numerical evaluations and the PSLQ algorithm, one can doubtless deduce an exact proof by viewing this integral as a double iterated integral of modular forms between cusps (remark \ref{remMMVinterp}), and applying the  Rankin-Selberg method to iterated integrals along the lines of \cite{Be} and  \cite{Brown:mmv2} \S9.






\appendix


\section{Differential forms on $\E\backslash\cC$ with logarithmic singularities}
\label{app:Eisenstein}

In this appendix we give the explicit expression for differential forms on $\E_k\backslash\cC$ with logarithmic singularities.
Since $\E_k\backslash\cC$ is an elliptic curve with 12 points removed, the first de Rham cohomology group of $\E_k\backslash\cC$ is generated by the classes of the holomorphic differential $-3\frac{dx}{y}$, a differential of the second kind, and 11 differentials with logarithmic singularities at the points of $\cC$. 

Under the modular parametrisation $\varphi:\Gamma(6)\backslash\mathfrak{H}\to \E\backslash\cC$ the holomorphic differential pulls back to the unique normalised cusp form $f(\tau)$ of weight two for $\Gamma(6)$, see~\eqref{eq:holomorphic_pullback}. The differentials with logarithmic singularities pull back to a linear combination of the cusp form and Eisenstein series of weight two. Every Eisenstein series of weight $n$ for $\Gamma(6)$ is a linear combination of the series in~\eqref{eq:EisensteinH_def}, and a linear independent set for $n=2$ is obtained for~\cite{Broedel:2018iwv}:
$$(r,s)\in\{(0,1),(0,2),(0,3),(1,0),(1,1),(1,2),(1,3),(1,4),(2,0),(2,2),(3,0)\}\,.$$

We now describe how to write the logarithmic differentials in terms of Eisenstein series and the cusp form of weight two.
As an example, let us consider the differential $\frac{dx}{y\,x}$, where we have chosen the positive branch of the square root so that $y=\sqrt{1+x^3}$. We have
$$
\varphi^*\frac{dx}{y\,x} = d\tau\,\frac{\partial_{\tau}x_6(\tau)}{y_6(\tau)\,x_6(\tau)} = -\frac{2\pi i}{3}\,d\tau\,\left[
q^3-5 q^9+6 q^{15}+8 q^{21}+\ldots\right]\,,
$$
with $q=e^{2\pi i\tau/6}$ and $x_6(\tau)$ and $y_6(\tau)$ are given in~\eqref{eq:x6y6_def}. By comparing the first few terms of this $q$-series to the $q$-expansion of a generic linear combination of $f(\tau)$ and a linear independent set of Eisenstein series of weight two, we find that
$$\varphi^*\frac{dx}{y\,x} = \frac{d\tau}{2\pi i}\,\left[  \frac{2}{3} H^{(2)}_{0,3}(\tau )-H^{(2)}_{1,0}(\tau )-2 H^{(2)}_{1,3}(\tau )+\frac{1}{3} H^{(2)}_{3,0}(\tau ) \right]\,.$$
All other cases can be obtained in a similar way, and we find:
\begin{align*}
\varphi^\ast\frac{dx}{x}&= \frac{d\tau}{2\pi i}\,\left[  3 H^{(2)}_{1,0}(\tau )+H^{(2)}_{3,0}(\tau ) \right]\,,\\
\varphi^\ast\frac{dx}{(x-2) y}&= \frac{d\tau}{2\pi i}\,\left[  -\frac{2}{3} H^{(2)}_{0,1}(\tau )-\frac{2}{9} H^{(2)}_{0,3}(\tau )+\frac{1}{3} H^{(2)}_{1,0}(\tau )+\frac{1}{3} H^{(2)}_{1,2}(\tau )+\frac{1}{3} H^{(2)}_{1,4}(\tau )\right.\\
&\phantom{=\frac{d\tau}{2\pi i}\,[}\left.-\frac{1}{9} H^{(2)}_{3,0}(\tau )-\frac{4\pi ^2}{9}  f(\tau ) \right]\,,\\
\varphi^\ast\frac{dx}{x-2}&= \frac{d\tau}{2\pi i}\,\left[  -H^{(2)}_{0,2}(\tau )+H^{(2)}_{1,0}(\tau )-H^{(2)}_{1,2}(\tau )-H^{(2)}_{1,4}(\tau )+H^{(2)}_{2,0}(\tau )\right.\\
&\phantom{=\frac{d\tau}{2\pi i}\,[}\left.+H^{(2)}_{3,0}(\tau ) \right]\,,\\
\varphi^\ast\frac{dx}{y (x+2 \rho)}&= \frac{d\tau}{2\pi i}\,\left[  \frac{4\pi ^2}{9}  \bar{\rho }\, f(\tau )-\frac{2}{9} H^{(2)}_{0,3}(\tau )+\frac{2}{3} H^{(2)}_{1,1}(\tau )+\frac{1}{3} H^{(2)}_{1,4}(\tau )-\frac{1}{9} H^{(2)}_{3,0}(\tau ) \right]\,,\\
\varphi^\ast\frac{dx}{x+2 \rho}&= \frac{d\tau}{2\pi i}\,\left[  2 H^{(2)}_{1,0}(\tau )+H^{(2)}_{1,4}(\tau )+H^{(2)}_{2,0}(\tau )-H^{(2)}_{2,2}(\tau )+H^{(2)}_{3,0}(\tau ) \right]\,,\\
\varphi^\ast\frac{dx}{y \left(x+2 \bar{\rho }\right)}&= \frac{d\tau}{2\pi i}\,\left[  \frac{4 \pi ^2}{9}\rho\,f(\tau )+\frac{2}{3} H^{(2)}_{0,1}(\tau )-\frac{2}{9} H^{(2)}_{0,3}(\tau )-\frac{2}{3} H^{(2)}_{1,0}(\tau )-\frac{2}{3} H^{(2)}_{1,1}(\tau )\right.\\
&\phantom{=\frac{d\tau}{2\pi i}\,[}\left.-\frac{1}{3} H^{(2)}_{1,2}(\tau )-\frac{2}{3} H^{(2)}_{1,3}(\tau )-\frac{2}{3} H^{(2)}_{1,4}(\tau )-\frac{1}{9} H^{(2)}_{3,0}(\tau ) \right]\,,\\
\varphi^\ast\frac{dx}{x+2 \bar{\rho }}&= \frac{d\tau}{2\pi i}\,\left[  H^{(2)}_{0,2}(\tau )+2 H^{(2)}_{1,0}(\tau )+H^{(2)}_{1,2}(\tau )+2 H^{(2)}_{2,0}(\tau )+H^{(2)}_{2,2}(\tau )\right.\\
&\phantom{=\frac{d\tau}{2\pi i}\,[}\left.+H^{(2)}_{3,0}(\tau ) \right]\,,\\
\varphi^\ast\frac{dx}{x+1}&= \frac{d\tau}{2\pi i}\,\left[  -H^{(2)}_{0,2}(\tau )+4 H^{(2)}_{1,0}(\tau )+2 H^{(2)}_{1,2}(\tau )+2 H^{(2)}_{1,4}(\tau )+H^{(2)}_{2,0}(\tau ) \right]\,,\\
\varphi^\ast\frac{dx}{x-\rho }&= \frac{d\tau}{2\pi i}\,\left[  2 H^{(2)}_{1,0}(\tau )-2 H^{(2)}_{1,4}(\tau )+H^{(2)}_{2,0}(\tau )-H^{(2)}_{2,2}(\tau ) \right]\,,\\
\varphi^\ast\frac{dx}{x-\bar{\rho }}&= \frac{d\tau}{2\pi i}\,\left[  H^{(2)}_{0,2}(\tau )+2 H^{(2)}_{1,0}(\tau )-2 H^{(2)}_{1,2}(\tau )+2 H^{(2)}_{2,0}(\tau )+H^{(2)}_{2,2}(\tau ) \right]\,.
\end{align*}

\bibliography{bib}
\bibliographystyle{JHEP}

 \end{document}